\DeclareMathOperator{\diag}{diag}
\DeclareMathOperator{\trace}{trace}
\DeclareMathOperator{\vect}{vec}
\newtheorem{theorem}{Theorem}
\newcommand{\dd}{\mathrm{d}} 
\renewcommand{\t}{^{\mbox{\tiny\sf T}}} 
\newcommand{\R}{\mathbb{R}}
\def\send#1#2{\stackrel{#1}{\hbox to #2{\rightarrowfill}}}
\def\-{\!\!\!\!\!-}
\newcommand{\rank}{{\rm rank\;}}
\newtheorem{lemma}{Lemma}
\newtheorem{remark}{Remark}
\newtheorem{corollary}{Corollary}
\def\R{{\rm I\!R}} 
\newcounter{seqn}[equation]
\def\theseqn{\arabic{equation}\alph{seqn}}
\def\endseqn{\eqno \@seqnnum
$$\ignorespaces}
\def\@seqnnum{(\theseqn)}
\newskip\mcentering \mcentering=0pt plus 1000pt minus 1000pt
\def\meqalignno#1{
\halign to\displaywidth{
    \hbox to 0pt{\kern\displaywidth\llap{$##$}\hss}\tabskip=\mcentering
    &\hfil$\displaystyle{##}$\tabskip=\mcentering
   &&$\displaystyle{{}##}$\hfil\tabskip=\mcentering
    \crcr
    #1\crcr}}
\def\dspace{\multiply\normalbaselineskip 150
		  \divide\normalbaselineskip 100 \normalbaselines
		  \csname @@normalbaselineskip\endcsname\normalbaselineskip}
\def\sspace{\multiply\normalbaselineskip 200
		 \divide\normalbaselineskip 300 \normalbaselines
		 \csname @@normalbaselineskip\endcsname\normalbaselineskip}
\def\sdspace{\multiply\normalbaselineskip 160
		 \divide\normalbaselineskip 150 \normalbaselines
		 \csname @@normalbaselineskip\endcsname\normalbaselineskip}
\def\@{\tilde}
\def\3dot#1{\buildrel\textstyle...\over#1}
\renewcommand{\R}{\mathbb{R}}
\begin{document}

\title{Optimal Covariance Steering for \\
Continuous-Time Linear Stochastic Systems \\
With Additive Noise}

\author{Fengjiao Liu and Panagiotis Tsiotras
\thanks{This work has been supported by NASA University Leadership Initiative award 80NSSC20M0163 and ONR award N00014-18-1-2828.}
\thanks{F. Liu and P. Tsiotras are with the School of Aerospace Engineering, Georgia Institute of Technology, Atlanta, GA 30332 USA (e-mail: \{fengjiao, tsiotras\}@gatech.edu).}}

\maketitle

\begin{abstract}
In this paper, we study the problem of how to optimally steer the state covariance of a general continuous-time linear stochastic system over a finite time interval subject to additive noise. 
Optimality here means reaching a target state covariance with minimal control energy. 
The additive noise may include a combination of white Gaussian noise and abrupt ``jump noise'' that is discontinuous in time. 
We first establish the controllability of the state covariance for linear time-varying stochastic systems. 
We then turn to the derivation of the optimal control, which entails solving two dynamically coupled matrix ordinary differential equations (ODEs) with split boundary conditions. 
We show the existence and uniqueness of the solution to these coupled matrix ODEs, and thus those of the optimal control.
\end{abstract}

\section{Introduction}

All dynamical systems are prone to disturbances whose effects persist with time. 
Controlling uncertainty is critical for the robustness and overall performance of all such systems. 
Among the various approaches, covariance control theory, which has been developed since the mid-1980s, provides a \textit{direct} way to regulate the state covariance of a stochastic system subject to noise~\cite{hotz1985covariance, collins1985covariance}. 
By quantifying uncertainty in a direct manner, covariance control theory has found applications in many real-life engineering and physics problems, such as spacecraft soft landing \cite{ridderhof2018uncertainty} and trajectory optimization \cite{ridderhof2020fuel}, vehicle path planning \cite{okamoto2019optimal, zheng2022belief, yin2022trajectory}, drone delivery, multi-agent systems \cite{saravanos2021distributed}, active cooling of stochastic oscillators \cite{chen2021optimal}, Schr\"{o}dinger bridge problem \cite{chen2016stochastic}, and optimal mass transport problem \cite{chen2016relation}.

Much of the earlier work dealt exclusively with controlling the stationary, or asymptotic, state covariance over an infinite time horizon \cite{hotz1987covariance, collins1987theory, yasuda1993covariance, georgiou2002structure, zhu1997convergent}. 
Most recent work has focused on the finite-horizon optimal covariance steering problem in both continuous-time \cite{chen2016I, chen2016II, chen2018III, ciccone2020regularized} and discrete-time settings \cite{bakolas2018finite, balci2021covariance, balci2021Wasserstein, okamoto2018optimal, sivaramakrishnan2021distribution, pilipovsky2021covariance}. 
It is shown that when the noise channel coincides with the control channel, and the noise is modeled by a Wiener process, there exists a unique optimal control that steers the state covariance from any initial positive definite matrix $\Sigma_{0} \succ 0$ to any final $\Sigma_{T} \succ 0$ over a finite time interval $[0, T]$ \cite{chen2016I, chen2018III}. 
The cases when one or both of the initial and final state covariances $\Sigma_{0}$ and $\Sigma_{T}$ are singular are treated in \cite{ciccone2020regularized}. 
When the noise channel differs from the control channel, and the coefficient matrices $A$, $B$, and $C$ are constant, controllability of the state covariance via state feedback is established in \cite{chen2016II}. 
However, the question of the existence and uniqueness of an optimal control policy is still open for different noise and control channels and more comprehensive noise models.


\textit{Contributions:} 
In this paper, 
we first extend the noise model used in covariance steering problems to include jumps of any size. 
Next, we establish a controllability result for the covariance equation of a linear time-varying stochastic system under mild assumptions. 
Lastly and most importantly, we show the existence and uniqueness of the optimal control law when the noise and control channels are different.


The rest of the paper is organized as follows. 
The main problem of interest is formulated in Section \ref{sec:prob}. 
The comprehensive noise model is discussed in Section \ref{sec:noise}. 
The controllability of the state covariance for a linear time-varying stochastic system is shown in Section \ref{sec:contr}. 
The existence and uniqueness of the optimal control are shown in Section \ref{sec:opt-ctrl}. 
Finally, two examples are presented in Section \ref{sec:expl} to illustrate the results of this paper. 
Some of the longer proofs are provided in the Appendix.

\section{Problem Formulation} \label{sec:prob}

Consider the following time-varying linear stochastic system subject to additive martingale noise
\begin{align} \label{sde:add-jump}
&\dd x(t) = A(t) x(t) \, \dd t + B(t) u(t) \, \dd t + C(t) \, \dd m(t), \\
&\mathbb{E}\left[x(0)\right] = 0, \quad \mathbb{E}\big[x(0) x\t(0) \big] = \Sigma_{0} \succ 0, \label{bgn:mean-cov}
\end{align}
where $x(t) \in \R^{n}$ is the state, $u(t) \in \R^{p}$ is the control input, 
$m(t) \in \R^{q}$ is a martingale \cite{cinlar2011stochastics} independent of $x(0)$ with $\dd\mathbb{E}\big[m(t) m\t(t) \big]/\dd t = D(t) \succeq 0$, 
and $A(t) \in \R^{n \times n}$, $B(t) \in \R^{n \times p}$, and $C(t) \in \R^{n \times q}$ are the coefficient matrices. 
Let $\mathcal{C}^{k}$ denote the class of $k$-times continuously differentiable functions. 
We assume that $A(t) \in \mathcal{C}^{n-1}$, $B(t) \in \mathcal{C}^{n}$, and $C(t), D(t) \in \mathcal{C}^{0}$. 
Without loss of generality, 
assume that \eqref{sde:add-jump} is defined on the time interval $[0, 1]$, and that the desired terminal state $x(1)$ is characterized by its mean\footnote{Assuming zero mean is just for convenience, otherwise the control will have an additional term; see Remark~9 in \cite{chen2016I}.} and covariance matrix given by 
\begin{equation} \label{tgt:mean-cov}
\mathbb{E}\left[x(1)\right]=0,
\quad
\mathbb{E}\big[x(1) x\t(1) \big]=\Sigma_{1} \succ 0.
\end{equation}
A control input $u \in \mathcal{U}$ is \emph{admissible} if,  for each $t \in [0, 1]$, 
it depends only on $t$ and on the past history of the states $\{x(s):\, 0 \leq s \leq t\}$, and satisfies
\begin{equation} \label{cost}
J(u) \triangleq \mathbb{E}\left[\int_{0}^{1} u\t(t) R(t) u(t) \, \dd t\right] < \infty,
\end{equation}
for some given continuous $R(t) \succ 0$ of dimension $p \times p$ for all $t \in [0, 1]$, such that \eqref{sde:add-jump} with the initial condition \eqref{bgn:mean-cov} has a strong solution \cite{protter2003stochastic}, and the desired terminal state mean and covariance given by \eqref{tgt:mean-cov} are achieved.

The problem is to determine whether $\mathcal{U}$ is nonempty and, if so, to find the optimal control $u^{*} \in \mathcal{U}$ that minimizes the quadratic cost functional \eqref{cost} subject to the boundary constraints \eqref{bgn:mean-cov}, \eqref{tgt:mean-cov}.

\section{Martingale Noise Model} \label{sec:noise}

In this section, we give some examples of the martingale noise used in this work, along with a brief discussion on how the martingale noise affects controlling the system.

The martingale noise $\dd m(t)$ has a mean value of zero. 
Generally speaking, the noise may not be white, since a martingale may not have stationary and independent increments. 
Some examples of noise modeled by $\dd m$ are as follows.
\begin{enumerate}[leftmargin=*]

\item
\textbf{White Gaussian noise:} $\dd m_{1}(t) = \dd w(t)$, where $w(t) \in \R^{q}$ is a Wiener process.

\item
\textbf{Noise with jumps:} $\dd m_{2}(t)= \dd h(t)-\lambda(t) \, \dd t$, where $h(t) \in \R^{q}$ is a nonhomogeneous Poisson process with (deterministic) arrival rate $\lambda(t) = [\lambda_{1}(t) \enspace \lambda_{2}(t) \enspace \dots \enspace \lambda_{q}(t)]\t$, since the compensated nonhomogeneous Poisson process $m_{2}(t) = h(t) - \int_{0}^{t} \lambda(\tau) \, \dd \tau$ is a martingale \cite{cinlar2011stochastics}. 
In general, the jump size may not be fixed but it is a random variable following a certain distribution. 
We may take $m_2$ to be a nonhomogeneous compound Poisson process subtracting its compensator \cite{cinlar2011stochastics}. 
Thus, $\dd m_2$ can model random noise of any jump size.

\item
\textbf{Combinations:} We can model continuous noise and noise with jumps, such as $\dd m_{3}(t) = \kappa \, \dd w(t) + \dd m_{2}(t)$, where $\kappa>0$, and $\dd m_{4}(t) = \dd \ell(t) - \dd e(t)$, where $\ell$ is a L\'{e}vy process and $e$ is its compensator \cite{cinlar2011stochastics}. 

\end{enumerate}

As the martingale noise is characterized solely by its mean and covariance, it is not surprising that the optimal control for martingale noise has the same structure as that for Gaussian noise. 
In both cases, the control boils down to solving two coupled matrix ordinary differential equations (ODEs), one Riccati equation and one Lyapunov equation, with split boundary conditions on the latter. 
In Section \ref{sec:opt-ctrl}, we provide a new, unifying approach to analyze these coupled matrix ODEs.

\section{Controllability of the State Covariance} \label{sec:contr}

In this section, the controllability of the state covariance for \eqref{sde:add-jump} is established. 
The state covariance of \eqref{sde:add-jump}, denoted by
\begin{equation*}
\Sigma(t) \triangleq \mathbb{E}\left[\big(x(t)-\mathbb{E}[x(t)]\big)\big(x(t)-\mathbb{E}[x(t)]\big)\t \right], 
\end{equation*}
is said to be \emph{controllable} on the time interval $[0, 1]$ if, for any given $\Sigma_{0}, \Sigma_{1} \succ 0$, there exists an admissible control $u \in \mathcal{U}$ that steers the state covariance $\Sigma(t)$ of \eqref{sde:add-jump} from $\Sigma(0) = \Sigma_{0}$ to $\Sigma(1) = \Sigma_{1}$, while maintaining $\Sigma(t) \succ 0$ for all $t \in [0, 1]$.

Consider the state feedback control law of the form 
\begin{equation} \label{eqn:stt-fb}
u(t) = K(t)x(t), \quad t \in [0, 1], 
\end{equation} 
where $K(t) \in \R^{p \times n}$ is the feedback matrix to be determined. 
Since $\mathbb{E}\left[x(0)\right] = 0$, we have $\mathbb{E}\left[x(t)\right] = 0$ for all $t \in [0, 1]$. 
Therefore, the state covariance $\Sigma(t) = \mathbb{E}\big[x(t) x\t(t) \big]$ satisfies
\begin{multline} \label{ode:sigma-stt-fb}
\dot{\Sigma} = \big(A(t)+B(t)K(t)\big)\Sigma + \Sigma\big(A(t)+B(t)K(t)\big)\t 
\\* 
+ C(t) D(t) C\t(t).
\end{multline}
Note that with the control \eqref{eqn:stt-fb}, controlling the covariance of the stochastic system \eqref{sde:add-jump} amounts to controlling the deterministic system \eqref{ode:sigma-stt-fb}. 
When the matrices $A$, $B$, $C$, and $D$ are constant and the pair $(A, B)$ is controllable, the state covariance dynamics \eqref{ode:sigma-stt-fb} is controllable via $K(t)$ \cite{chen2016II}, since \eqref{ode:sigma-stt-fb} can be reduced to the form of (27) in \cite[Theorem 3]{chen2016II}. 
It is not difficult to check that this statement still holds when $C(t)$ and $D(t)$ are time-varying. 
It will be shown below that when $A(t)$ and $B(t)$ are  time-varying, and under some mild assumptions, \eqref{ode:sigma-stt-fb} is also controllable and therefore $\mathcal{U}$ is non-empty.

Given a time-varying matrix pair $\big(A(t), B(t)\big)$ of dimensions $n \times n$ and $n \times p$, respectively, define 
\begin{align*}
\Theta_{i}(t) &\triangleq 
\begin{bmatrix}
\Gamma_{0}(t) & \Gamma_{1}(t) & \cdots & \Gamma_{i-1}(t)
\end{bmatrix}
, \enspace 1 \leq i \leq n+1, \\
\Gamma_{0}(t) &\triangleq B(t)
, \\ 
\Gamma_{k}(t) &\triangleq -A(t)\Gamma_{k-1}(t) + \dot{\Gamma}_{k-1}(t)
, \quad 1 \leq k \leq n.
\end{align*}
Then, the \emph{controllability matrix} of $\big(A(t), B(t)\big)$ is $\Theta_{n}(t)$  and the pair $\big(A(t), B(t)\big)$ is \emph{uniformly controllable} on the time interval $[0, 1]$ if, for all $t \in [0, 1]$, $\rank \Theta_{n}(t) = n$ \cite{silverman1967controllability}. 
Recall that the controllability property in \cite{chen2016I} requires that the controllability Gramian
\begin{equation} \label{eqn:N-def}
N(t_{1}, t_{0}) \triangleq \int_{t_{0}}^{t_{1}} \Phi_{A}(t_{0}, \tau) B(\tau) B\t(\tau) \Phi_{A}(t_{0}, \tau)\t \, \dd\tau
\end{equation}
is nonsingular for all $0 \leq t_{0} < t_{1} \leq 1$, where $\Phi_{A}(t, \tau)$ is the state transition matrix of $A(t)$ (see also \eqref{pde:stt-trans-A} below). 
This property is, in fact, equivalent to the statement that the pair $\big(A(t), B(t)\big)$ is \emph{totally controllable} on the time interval $[0, 1]$ \cite{kreindler1964concepts}, which holds if and only if, for all $0 \leq t_{0} < t_{1} \leq 1$, there exists $t \in (t_{0}, t_{1})$ such that $\rank \Theta_{n}(t) = n$ \cite{stubberud1964controllability}. 
Clearly, uniform controllability is a stronger property than total controllability. 
The pair $\big(A(t), B(t)\big)$ is \emph{index invariant} on the interval $[0, 1]$ if, for each $i \in \{1, 2, \dots, n+1\}$, $\rank \Theta_{i}(t)$ is constant for all $t \in [0, 1]$, and $\rank \Theta_{n}(t) = \rank \Theta_{n+1}(t)$ \cite{morse1973structure}.


\begin{theorem} \label{thm:contr}
Let the pair $\big(A(t), B(t)\big)$ be uniformly controllable and index invariant on the time interval $[0, 1]$. 
Then, the state covariance of the linear stochastic system \eqref{sde:add-jump} is controllable on $[0, 1]$ and also on any subinterval of $[0, 1]$. 
\end{theorem}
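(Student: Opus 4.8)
The plan is to turn covariance controllability into a single surjectivity statement and then read off that surjectivity from nonsingularity of the controllability Gramian $N$ in \eqref{eqn:N-def}. First I would fix the state feedback $u(t)=K(t)x(t)$, so that $\Sigma$ evolves by \eqref{ode:sigma-stt-fb}. The key observation is that, since $\Sigma(t)\succ0$, the feedback contribution $BK\Sigma+\Sigma K\t B\t$ may be treated as a \emph{free} symmetric input: writing $Y=K\Sigma$, the map $K\mapsto Y$ is a bijection of $\R^{p\times n}$ for each fixed $\Sigma\succ0$, so $BK\Sigma+\Sigma K\t B\t=BY+Y\t B\t$ sweeps out the subspace $\mathcal{R}(t)\triangleq\{B(t)Y+Y\t B(t)\t:Y\in\R^{p\times n}\}$ of symmetric matrices, independently of the current $\Sigma$. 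Thus steering $\Sigma$ from $\Sigma_0$ to $\Sigma_1$ is equivalent to steering the linear control system $\dot\Sigma=A\Sigma+\Sigma A\t+CDC\t+U$ on the space of symmetric matrices, with input $U(t)\in\mathcal{R}(t)$ and known affine drift $CDC\t\succeq0$.

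Next I would eliminate the drift $A\Sigma+\Sigma A\t$ by pulling back along the open-loop transition matrix. Setting $\widehat\Sigma(t)=\Phi_A(0,t)\Sigma(t)\Phi_A(0,t)\t$ and using $\partial_t\Phi_A(0,t)=-\Phi_A(0,t)A(t)$, the $A\Sigma$ and $\Sigma A\t$ terms cancel, leaving the pure integrator $\dot{\widehat\Sigma}=\widehat G(t)+\widehat B(t)\widehat Y(t)+\widehat Y(t)\t\widehat B(t)\t$, where $\widehat B=\Phi_A(0,t)B$, $\widehat G=\Phi_A(0,t)CDC\t\Phi_A(0,t)\t\succeq0$, and $\widehat Y$ is again free. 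Integrating over $[0,1]$ with $\widehat\Sigma(0)=\Sigma_0$ and $\widehat\Sigma(1)=\Phi_A(0,1)\Sigma_1\Phi_A(0,1)\t$, covariance controllability reduces to the solvability, for every symmetric $\Delta$, of $\mathcal M(\widehat Y)\triangleq\int_0^1(\widehat B\widehat Y+\widehat Y\t\widehat B\t)\,\dd t=\Delta$.

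I would then prove surjectivity of $\mathcal M$ by an adjoint computation. A short trace calculation gives $\mathcal M^{*}(P)=2\,\widehat B\t P$ for symmetric $P$, so $\mathcal M$ is onto exactly when the only symmetric $P$ with $P\widehat B(t)=0$ for a.e.\ $t$ is $P=0$, i.e.\ when the columns of $\widehat B(\cdot)$ span $\R^n$, i.e.\ when $N(1,0)=\int_0^1\widehat B\widehat B\t\,\dd t\succ0$. This is exactly what the hypotheses supply: uniform controllability implies total controllability, which by the equivalences recalled above means $N(t_1,t_0)\succ0$ for every $0\le t_0<t_1\le1$. Hence $\mathcal M$ is surjective, and since the same argument applies verbatim on any subinterval (the Gramian being nonsingular there), the result will follow on every subinterval as well.

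The one genuinely delicate point, and the step I expect to need the most care, is synthesizing an \emph{admissible} feedback. Both the recovery $K(t)=\widehat Y(t)\,\widehat\Sigma(t)^{-1}\Phi_A(0,t)$ and the requirement $\Sigma(t)\succ0$ come down to keeping $\widehat\Sigma(t)\succ0$ for all $t$, since $\Sigma=\Phi_A(t,0)\widehat\Sigma\Phi_A(t,0)\t$ and hence $\Sigma\succ0\iff\widehat\Sigma\succ0$. I would therefore select, within the affine family of $\widehat Y$ solving $\mathcal M(\widehat Y)=\Delta$, a continuous representative whose curve $\widehat\Sigma(t)$ remains in the open, convex cone of positive definite matrices; both endpoints $\Sigma_0$ and $\Phi_A(0,1)\Sigma_1\Phi_A(0,1)\t$ lie in its interior, which leaves ample slack to keep the path inside. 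With such a choice $K$ is continuous, the feedback is admissible, and $\Sigma(t)\succ0$ throughout. I note that this Gramian-based route uses only total controllability; index invariance would be needed only for an alternative argument that reduces the time-varying pair to the constant-coefficient canonical form underlying \cite{chen2016II}.
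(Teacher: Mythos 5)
Your route is genuinely different from the paper's: the paper assumes $B(t)$ has full column rank, uses uniform controllability plus index invariance to transform $\big(A(t),B(t)\big)$ into the Wolovich--Seal canonical form, applies a preliminary feedback $\tilde F(t)$ that renders the pair time-invariant, and then invokes the time-invariant covariance controllability result of \cite{chen2016II}. You instead relax $K$ to $Y=K\Sigma$, eliminate the drift via $\Phi_A$, and prove surjectivity of the endpoint map by the adjoint/Gramian computation. That part of your argument is correct (and it is also why index invariance drops out, consistent with the paper's Remark~1, which obtains the same strengthening but only through the full machinery of Theorem~\ref{thm:exist-unique}).

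However, there is a genuine gap exactly at the step you flag as ``delicate.'' Surjectivity of $\mathcal M$ over \emph{all} continuous $\widehat Y$ does not imply surjectivity over the subset of $\widehat Y$ whose trajectory $\widehat\Sigma(t)$ stays positive definite, and only the latter corresponds to an admissible feedback $K=\widehat Y\widehat\Sigma^{-1}\Phi_A(0,t)$. The assertion that the two interior endpoints leave ``ample slack'' implicitly treats the reachable paths as all continuous paths in the cone, but they are not: at each $t$ the derivative is constrained to the affine set $\widehat G(t)+\{\widehat B(t)Y+Y\t\widehat B(t)\t\}$, which is a \emph{proper} subspace of the symmetric matrices whenever $p<n$. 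So to reach a target much ``smaller'' than $\Sigma_0$ you must squeeze $\widehat\Sigma$ only along time-varying rank-deficient directions, and whether this can be done without the path touching the boundary of the cone is precisely the nontrivial content of the controllability theorem in \cite{chen2016II} that the paper reduces to; it does not follow from the linear-algebraic surjucctivity of $\mathcal M$ alone. (A useful observation you could exploit but did not: for \emph{any} continuous bounded $K$, positivity of $\Sigma(t)$ is automatic since $\Sigma(t)\succeq\Phi_{A+BK}(t,0)\Sigma_0\Phi_{A+BK}(t,0)\t\succ0$; the real issue is therefore that a $\widehat Y$ whose path leaves the cone corresponds to a $K$ that blows up, so the bijection $K\leftrightarrow Y$ breaks down exactly where you need it.) To close the argument you would need an explicit construction or a fixed-point/degree argument selecting, within $\widehat Y^{*}+\ker\mathcal M$, an element whose path stays in the open cone --- or simply fall back on the paper's reduction to \cite{chen2016II}.
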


\begin{proof}
Without loss of generality, assume $B(t)$ is of full column rank for all $t \in [0, 1]$. 
Since $\big(A(t), B(t)\big)$ is uniformly controllable and index invariant, there exists a time-varying coordinate transformation that brings the pair $\big(A(t), B(t)\big)$ into its canonical form \cite{wolovich1968stabilization, seal1969canonical}. 
That is, there exist nonsingular matrices $P(t) \in \R^{n \times n}$ and $Q(t) \in \R^{p \times p}$ such that the new state $\tilde{x}(t) = P(t)x(t)$ satisfies
\begin{equation*}
\dd\tilde{x}(t) = \tilde{A}(t) \tilde{x}(t) \, \dd t + \tilde{B} \tilde{u}(t) \, \dd t + \tilde{C}(t) \, \dd m(t),
\end{equation*}
where $\tilde{A}(t) = \big(P(t)A(t) + \dot{P}(t)\big) P^{-1}(t)$, $\tilde{B} = P(t)B(t)Q(t)$, $\tilde{C}(t) = P(t)C(t)$, $\tilde{u}(t) = Q^{-1}(t) u(t)$, and the pair $\big(\tilde{A}(t), \tilde{B}\big)$ is in the canonical form
\begin{align*}
&\tilde{A}(t) = 
	\begin{bmatrix}
	\tilde{A}_{11} & \tilde{A}_{12} & \dots & \tilde{A}_{1p} \\
	\tilde{A}_{21} & \tilde{A}_{22} & \dots & \tilde{A}_{2p} \\
	\vdots & \vdots & \ddots & \vdots \\
	\tilde{A}_{p1} & \tilde{A}_{p2} & \dots & \tilde{A}_{pp}
	\end{bmatrix}
	, \enspace
\tilde{B} = 
	\begin{bmatrix}
	b_{1} & 0 & \dots & 0 \\
	0 & b_{2} & \dots & 0 \\
	\vdots & \vdots & \ddots & \vdots \\
	0 & 0 & \dots & b_{p}
	\end{bmatrix}
\\
&\tilde{A}_{ii} = 
	\begin{bmatrix}
	0 & 1 & \dots & 0 \\
	\vdots & \ddots & \ddots & \vdots \\
	0 &  \dots & 0 & 1 \\
	\times & \dots & \times & \times
	\end{bmatrix}
	, \,
\tilde{A}_{ij} = 
	\begin{bmatrix}
	0 & \dots & 0 \\
	\vdots & \vdots & \vdots \\
	0 & \dots & 0 \\
	\times & \dots & \times
	\end{bmatrix}
	, \,
b_{i} = 
	\begin{bmatrix}
	0 \\
	\vdots \\
	0 \\
	1
	\end{bmatrix}
\\
&\tilde{A}_{ij}(t) \in \R^{n_{i} \times n_{j}}, \enspace 
i, j \in \{1, 2, \dots, p\}, \enspace 
n_{1} + \dots + n_{p} = n.
\end{align*}
With the state feedback control $\tilde{u}(t) = \tilde{K}(t) \tilde{x}(t)$, where $\tilde{K}(t) = Q^{-1}(t) K(t) P^{-1}(t)$, 
the new state covariance 
\begin{align*}
\tilde{\Sigma}(t) 
&\triangleq \mathbb{E}\left[\big(\tilde{x}(t)-\mathbb{E}[\tilde{x}(t)]\big) \big(\tilde{x}(t)-\mathbb{E}[\tilde{x}(t)]\big)\t \right] \\
&= P(t) \Sigma(t) P\t(t)
\end{align*}
satisfies
\begin{multline*}
\dot{\tilde{\Sigma}} = \big(\tilde{A}(t) + \tilde{B} \tilde{K}(t)\big) \tilde{\Sigma} + \tilde{\Sigma} \big(\tilde{A}(t) + \tilde{B} \tilde{K}(t)\big)\t 
\\* 
+ \tilde{C}(t)D(t)\tilde{C}\t(t). 
\end{multline*}
From the previous expressions of $\tilde{A}(t)$ and $\tilde{B}$, 
it follows that there exists 
\begin{equation*}
\tilde{F}(t) = - \big[\tilde{a}_{1}\t(t) \quad \tilde{a}_{2}\t(t) \quad \dots \quad \tilde{a}_{p}\t(t) \big]\t \in \R^{p \times n},
\end{equation*}
where $\tilde{a}_{i}(t) \in \R^{1 \times n}$ is the $\big(\sum_{j=1}^{i} n_{j}\big)$th row of $\tilde{A}(t)$ for $i \in \{1, \dots, p\}$, such that $\big(\tilde{A}(t)+\tilde{B}\tilde{F}(t), \tilde{B}\big)$ is a time-invariant matrix pair. 
Notice that the initial and terminal constraints \eqref{bgn:mean-cov} and \eqref{tgt:mean-cov} in the new coordinates are 
\begin{align*}
&\mathbb{E}\left[\tilde{x}(0)\right] = 0,
\quad 
\mathbb{E}\left[ \tilde{x}(0) \tilde{x}\t(0) \right] = \tilde{\Sigma}_{0} = P(0)\Sigma_{0} P\t(0) \succ 0, \nonumber \\
&\mathbb{E}\left[\tilde{x}(1)\right]=0,
\quad
\mathbb{E}\left[ \tilde{x}(1) \tilde{x}\t(1) \right] = \tilde{\Sigma}_{1} = P(1)\Sigma_{1} P\t(1) \succ 0.
\end{align*}
Clearly, if $\tilde{\Sigma}(t) \succ 0$ for all $t \in [0, 1]$, then $\Sigma(t) \succ 0$ for all $t \in [0, 1]$. 
Therefore, the controllability of the state covariance for the time-varying system \eqref{sde:add-jump} follows from that for the time-invariant system $\big(\tilde{A}(t)+\tilde{B}\tilde{F}(t), \tilde{B}\big)$ \cite{chen2016II}. 
Since $\big(A(t), B(t)\big)$ is uniformly controllable and index invariant on $[0, 1]$, it follows that $\big(A(t), B(t)\big)$ is also uniformly controllable and index invariant on any subinterval. \hfill
\end{proof}

\begin{remark}
Theorem \ref{thm:exist-unique} shows that if the pair $\big(A(t), B(t)\big)$ is totally controllable on $[0, 1]$, there exists a unique optimal control for any $\Sigma_{0}, \Sigma_{1} \succ 0$. 
It follows that if the pair $\big(A(t), B(t)\big)$ is totally controllable on $[0, 1]$, the state covariance of \eqref{sde:add-jump} is controllable on $[0, 1]$ and also on any subinterval of $[0, 1]$.
\end{remark}

\section{Optimal Control of the State Covariance} \label{sec:opt-ctrl}

In this section, the existence and uniqueness of the optimal control for the covariance steering problem are established. 
It is assumed throughout this section that the pair $\big(A(t), B(t)\big)$ is totally controllable on $[0, 1]$.


Using the ``completion of squares'' argument \cite{chen2016I, chen2016II, chen2018III}, a candidate optimal control law can be derived as follows.

\begin{lemma} \label{lem:suff}
Assume $\Pi(t)$ and $\Sigma(t)$ satisfy, for $t \in [0, 1]$, 
\begin{align} 
&\dot{\Pi} = - A\t(t) \Pi - \Pi A(t) + \Pi B(t) R^{-1}(t) B\t(t) \Pi,
\label{ode:pi} \\
&\dot{\Sigma} = \big(A(t) - B(t) R^{-1}(t) B\t(t) \Pi(t)\big)\Sigma + C(t) D(t) C\t(t)
\nonumber \\* 
&\hspace{16mm} 
+ \Sigma\big(A(t) - B(t) R^{-1}(t) B\t(t) \Pi(t)\big)\t,
\label{ode:sigma} \\
&\Sigma(0) = \Sigma_{0} \succ 0, \qquad \Sigma(1) = \Sigma_{1} \succ 0.
\label{bdr:sigma}
\end{align}
Then, the state feedback control 
\begin{equation} \label{ctrl:opt}
u^{*}(t) = - R^{-1}(t) B\t(t) \Pi(t) x(t)
\end{equation}
is optimal with respect to the cost functional \eqref{cost}, subject to the boundary constraints \eqref{bgn:mean-cov}, \eqref{tgt:mean-cov}. 
\end{lemma}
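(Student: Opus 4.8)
The plan is to establish optimality by a completion-of-squares argument built around the Riccati matrix $\Pi(t)$. Fix an arbitrary admissible control $u \in \mathcal{U}$. By the definition of admissibility together with \eqref{bgn:mean-cov} and \eqref{tgt:mean-cov}, every such $u$ satisfies the zero-mean endpoint conditions and steers the state covariance from $\Sigma_{0}$ at $t=0$ to $\Sigma_{1}$ at $t=1$, so the boundary values $\mathbb{E}\big[x\t(0)\Pi(0)x(0)\big] = \trace\big(\Pi(0)\Sigma_{0}\big)$ and $\mathbb{E}\big[x\t(1)\Pi(1)x(1)\big] = \trace\big(\Pi(1)\Sigma_{1}\big)$ take the same, control-independent values for all admissible $u$. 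This invariance of the boundary data is the structural fact that makes the argument work.

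First I would track the scalar quantity $\mathbb{E}\big[x\t(t)\Pi(t)x(t)\big]$ along the closed-loop dynamics \eqref{sde:add-jump}. Applying the Dynkin/It\^{o} formula to the quadratic $x\t\Pi x$ and taking expectations---using that $m$ is a martingale with $\dd\mathbb{E}\big[m(t)m\t(t)\big]/\dd t = D(t)$, so that the stochastic-integral term has zero mean and the quadratic variation of $m$ contributes the trace term---gives
\begin{multline*}
\frac{\dd}{\dd t}\mathbb{E}\big[x\t\Pi x\big]
= \mathbb{E}\big[x\t(\dot\Pi + A\t\Pi + \Pi A)x
\\*
+ u\t B\t\Pi x + x\t\Pi B u\big] + \trace\big(\Pi C D C\t\big).
\end{multline*}
Substituting the Riccati equation \eqref{ode:pi}, namely $\dot\Pi + A\t\Pi + \Pi A = \Pi B R^{-1}B\t\Pi$, and completing the square in $u$ against $R \succ 0$ yields the pointwise identity
\begin{multline*}
\mathbb{E}\big[u\t R u\big]
= \mathbb{E}\big[(u - u^{*})\t R\,(u - u^{*})\big]
\\*
- \frac{\dd}{\dd t}\mathbb{E}\big[x\t\Pi x\big] + \trace\big(\Pi C D C\t\big),
\end{multline*}
where $u^{*} = -R^{-1}B\t\Pi x$ is the candidate control \eqref{ctrl:opt}.

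Next I would integrate this identity over $[0,1]$. The exact-derivative term integrates to $\trace\big(\Pi(0)\Sigma_{0}\big) - \trace\big(\Pi(1)\Sigma_{1}\big)$, and this, together with $\int_{0}^{1}\trace\big(\Pi C D C\t\big)\,\dd t$, is independent of $u$ by the invariance noted above. Hence
\begin{equation*}
J(u) = \int_{0}^{1}\mathbb{E}\big[(u - u^{*})\t R\,(u - u^{*})\big]\,\dd t + \text{const},
\end{equation*}
and since $R \succ 0$ the integral is nonnegative and vanishes exactly when $u = u^{*}$ almost everywhere. To finish I would verify that $u^{*}$ is itself admissible: under \eqref{ctrl:opt} the closed-loop covariance obeys \eqref{ode:sigma} with the boundary conditions \eqref{bdr:sigma}, so it indeed reaches $\Sigma_{1}$, and $J(u^{*})$ is finite because $\Pi$, $\Sigma$, and $R$ are continuous on the compact interval $[0,1]$. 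Therefore $u^{*}$ minimizes \eqref{cost} over $\mathcal{U}$.

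The main obstacle I anticipate is the rigorous justification of the expectation identity in the first display for the general martingale noise, which may carry jumps. For a process with jumps the change-of-variables formula for $x\t\Pi x$ picks up discrete increments $2x_{t^{-}}\t\Pi\,\Delta x + \Delta x\t\Pi\,\Delta x$ with $\Delta x = C\,\Delta m$; one must confirm that, after compensation, these average to the same $\trace\big(\Pi C D C\t\big)$ contribution, and that the martingale part integrates to zero in expectation under the integrability guaranteed by $J(u) < \infty$. Establishing this---so that the covariance dynamics depend on the noise only through its second-moment rate $D(t)$---is what licenses treating the white and jump cases uniformly.
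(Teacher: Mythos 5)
Your proposal is correct and is precisely the ``completion of squares'' argument that the paper itself invokes for this lemma (the paper states Lemma~\ref{lem:suff} without a written proof, citing \cite{chen2016I, chen2016II, chen2018III} for exactly this derivation). Your handling of the boundary-term invariance over admissible controls and your flagged concern about the jump contribution to the quadratic variation---which indeed enters only through $D(t)$ after compensation---match the intended reasoning.
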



Next, we show the following main result by analyzing the solution to the coupled matrix ODEs \eqref{ode:pi}, \eqref{ode:sigma}, \eqref{bdr:sigma}.

\begin{theorem} \label{thm:exist-unique}
For any given $\Sigma_{0}, \Sigma_{1} \succ 0$, the unique optimal control that solves the covariance steering problem is given by \eqref{ctrl:opt}, where $\Pi(t)$ is the unique solution to \eqref{ode:pi}, \eqref{ode:sigma}, \eqref{bdr:sigma}.
\end{theorem}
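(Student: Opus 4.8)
The plan is to reduce the theorem to the well-posedness of the two-point boundary value problem (TPBVP) \eqref{ode:pi}, \eqref{ode:sigma}, \eqref{bdr:sigma}, and to treat uniqueness and existence by different means. By Lemma~\ref{lem:suff}, every solution $(\Pi,\Sigma)$ of the TPBVP yields an optimal control via \eqref{ctrl:opt}, so it suffices to show such a solution exists and is unique. I would first make Lemma~\ref{lem:suff} quantitative through the completion-of-squares identity: differentiating $\mathbb{E}\big[x\t\Pi x\big]$ along \eqref{sde:add-jump} and using \eqref{ode:pi} (the stochastic integrals against $\dd m$ having zero mean), one gets for every admissible feasible $u$, with $u^{*}=-R^{-1}B\t\Pi x$,
\begin{multline*}
J(u) = \trace\big(\Pi(0)\Sigma_{0}\big) - \trace\big(\Pi(1)\Sigma_{1}\big) + \int_{0}^{1}\trace\big(\Pi C D C\t\big)\,\dd t \\
+ \mathbb{E}\!\int_{0}^{1}(u-u^{*})\t R\,(u-u^{*})\,\dd t.
\end{multline*}
The first three terms are fixed once $\Pi$ is fixed, and since $R\succ0$ the last term is nonnegative and vanishes only at $u=u^{*}$. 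Hence, as soon as one TPBVP solution $\Pi$ exists, $u^{*}$ is feasible and is the \emph{unique} minimizer of \eqref{cost}, which gives uniqueness of the optimal control.

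For the uniqueness of $\Pi$ itself, I would argue by contradiction. If $\Pi^{(1)}$ and $\Pi^{(2)}$ both solve the TPBVP, each gives an optimal feedback, and by the uniqueness just established these two feedbacks coincide along the optimal process; since its covariance $\Sigma\succ0$, this forces $B\t\big(\Pi^{(1)}-\Pi^{(2)}\big)=0$ on $[0,1]$. Writing $\Delta=\Pi^{(1)}-\Pi^{(2)}$ and subtracting the two copies of \eqref{ode:pi}, the quadratic terms cancel because $\Delta B=0$, leaving $\dot\Delta=-A\t\Delta-\Delta A$, whose solution is $\Delta(t)=\Phi_{A}(0,t)\t\Delta(0)\Phi_{A}(0,t)$. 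The constraint $B\t\Delta\equiv0$ then makes $\Delta(0)$ annihilate $\Phi_{A}(0,t)B(t)$ for all $t$, and total controllability (nonsingularity of \eqref{eqn:N-def}) forces these vectors to span $\R^{n}$; hence $\Delta(0)=0$ and $\Pi^{(1)}=\Pi^{(2)}$.

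The main step is existence. Since \eqref{ode:pi} does not involve $\Sigma$, I would linearize it as $\Pi=VU^{-1}$ with $\dot U=AU-BR^{-1}B\t V$, $\dot V=-A\t V$, and the normalization $U(0)=I$, $V(0)=\Pi_{0}$, so that $U$ is exactly the closed-loop transition matrix $\Phi_{\mathrm{cl}}(\cdot,0)$ of $A-BR^{-1}B\t\Pi$ and admits the closed form $U(t)=\Phi_{A}(t,0)\big(I-M(t)\Pi_{0}\big)$, where $M(t)=\int_{0}^{t}\Phi_{A}(0,s)BR^{-1}B\t\Phi_{A}(0,s)\t\,\dd s\succ0$ for $t>0$ by total controllability. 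Solving \eqref{ode:sigma} forward from $\Sigma_{0}$ then gives $\Sigma(t)=\Phi_{\mathrm{cl}}(t,0)\big(\Sigma_{0}+G(t)\big)\Phi_{\mathrm{cl}}(t,0)\t$ with $G(t)=\int_{0}^{t}\Phi_{\mathrm{cl}}(s,0)^{-1}CDC\t\Phi_{\mathrm{cl}}(s,0)^{-\t}\,\dd s\succeq0$, so on the open set $\mathcal{D}=\{\Pi_{0}=\Pi_{0}\t:\,I-M(t)\Pi_{0}\text{ invertible }\forall t\in[0,1]\}$ one has $\Sigma(t)\succeq\Phi_{\mathrm{cl}}(t,0)\Sigma_{0}\Phi_{\mathrm{cl}}(t,0)\t\succ0$, and the TPBVP is equivalent to the matching equation $\mathcal{F}(\Pi_{0}):=\Phi_{\mathrm{cl}}(1,0)\big(\Sigma_{0}+G(1)\big)\Phi_{\mathrm{cl}}(1,0)\t=\Sigma_{1}$ for a symmetric $\Pi_{0}\in\mathcal{D}$. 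I would then show that the smooth map $\mathcal{F}$ from $\mathcal{D}$ into the cone of symmetric positive-definite matrices is onto: it is injective (the uniqueness argument above applies to any target $\succ0$), hence an open map by invariance of domain; and its image is also closed in the positive-definite cone provided $\mathcal{F}(\Pi_{0})$ degenerates to the boundary of the cone whenever $\Pi_{0}$ approaches $\partial\mathcal{D}$ or escapes to infinity. Since the cone is connected and $\mathcal{F}(0)\succ0$, this yields surjectivity and thus a (necessarily unique) $\Pi_{0}$ with $\mathcal{F}(\Pi_{0})=\Sigma_{1}$.

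The hard part will be this last, global surjectivity claim, i.e. the properness and boundary analysis of $\mathcal{F}$. The difficulty is that the additive noise enters $\mathcal{F}$ through $\Phi_{\mathrm{cl}}(s,0)^{-1}=\big(I-M(s)\Pi_{0}\big)^{-1}\Phi_{A}(0,s)$ under the integral in $G$, so $\mathcal{F}$ is a genuinely nonlinear (rational) function of $\Pi_{0}$ rather than an affine one; this is also why a direct convexity or weak-compactness argument for existence fails, the terminal-covariance constraint being quadratic and not weakly continuous in $u$. I expect to control the boundary behaviour by tracking how $\det\big(I-M(t)\Pi_{0}\big)\to0$ forces an eigenvalue of $\Sigma(1)$ to $0$ or $\infty$. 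As a fallback I would develop a Loewner-monotonicity argument, showing $\mathcal{F}$ is strictly order-reversing in $\Pi_{0}$ (stronger feedback shrinks the terminal covariance) with suitable coercivity at both ends, which would deliver existence and uniqueness together.
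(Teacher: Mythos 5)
Your reduction is the same as the paper's: everything hinges on showing that the map $\Pi_{0}\mapsto\Sigma(1)$ (your $\mathcal{F}$, the paper's $f$ in \eqref{map:pi0-sigma1}) is a bijection from $\{\Pi_{0}=\Pi_{0}\t\prec N_{10}^{-1}\}$ onto the positive-definite cone, and your closed forms $U(t)=\Phi_{A}(t,0)(I-M(t)\Pi_{0})$ and $\Sigma(t)=\Phi_{\mathrm{cl}}(t,0)(\Sigma_{0}+G(t))\Phi_{\mathrm{cl}}(t,0)\t$ are exactly \eqref{eqn:phi-AB-pi} and \eqref{sol:sigma}. Where you genuinely diverge is injectivity: the paper computes the Jacobian of the vectorized map (Lemma~\ref{lem:map-jacob}), proves the Kronecker-sum factor is positive definite by a trace inequality, and then gets the bijection in one shot from Hadamard's global inverse function theorem (Lemma~\ref{lem:map-inv}); you instead derive injectivity from uniqueness of the minimizer in the completion-of-squares identity plus the Gramian argument $\Delta(0)N_{10}\Delta(0)=0\Rightarrow\Delta(0)=0$. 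That argument is correct and arguably more elementary, and combined with invariance of domain it legitimately replaces the Jacobian computation for the purpose of openness.

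The genuine gap is precisely the step you label ``the hard part'' and defer: surjectivity needs properness of $\mathcal{F}$, i.e., that $\mathcal{F}(\Pi_{0}^{(k)})$ cannot converge inside the open cone while $\Pi_{0}^{(k)}$ escapes to infinity or approaches $\partial\mathcal{D}$, and without it your ``open $+$ closed image in a connected codomain'' scheme does not close. The paper handles the escape-to-infinity half with a one-line observation you should adopt: since $G(1)\succeq0$, one has $\mathcal{F}(\Pi_{0})\succeq\Phi_{A_{10}}(I-N_{10}\Pi_{0})\Sigma_{0}(I-\Pi_{0}N_{10})\Phi_{A_{10}}\t$, so boundedness of $\mathcal{F}(\Pi_{0})$ together with $\Sigma_{0}\succ0$ and invertibility of $N_{10}$ bounds $\Pi_{0}$ itself; the boundary half is the degeneration statement of Lemma~\ref{lem:sigma1} ($\Sigma(1)\to0_{n\times n}$ as $\Pi(s)\to N(1,s)^{-1}$, and $\Sigma(1)\to+\infty$ as $\Pi(s)\to N(0,s)^{-1}$ when $\Sigma_{0}\succ0$). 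Also discard your fallback plan: the paper points out explicitly that for $n>1$ the map $f$ is \emph{not} monotone in the Loewner order, so a strict order-reversal argument can only work in the scalar case (where the paper does use it, in Theorem~\ref{thm:exist-unique-1d}).
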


First, it is straightforward to check that, for $R(t) \succ 0$, if the pair $\big(A(t), B(t)\big)$ is totally controllable on $[0, 1]$, then the pair $\big(A(t), B(t)R^{-\frac{1}{2}}(t)\big)$ is also totally controllable on $[0, 1]$. 
As $R(t)$ can always be recovered by a time-varying coordinate transformation, for simplicity, we assume $R(t) \equiv I_{p}$ for the rest of this section. 
To begin with, a complete solution of $\Pi(t)$ to \eqref{ode:pi} is presented.

\subsection{Solution to Riccati Differential Equation}

Let $\Phi_{A}(t, s)$ denote the state transition matrix of $A(t)$, which satisfies 
\begin{equation} \label{pde:stt-trans-A}
\frac{\partial}{\partial t} \Phi_{A}(t, s) = A(t) \Phi_{A}(t, s), \quad \Phi_{A}(s, s) = I_{n}.
\end{equation}
Let the matrix $N(t,s)$ as in \eqref{eqn:N-def}. 
Since $\big(A(t), B(t)\big)$ is totally controllable on $[0, 1]$, $N(t, s) \succ 0$ for all $0 \leq s < t \leq 1$.


\begin{lemma} \label{lem:pi-exist-sol}
Given $\Pi(s)$ for some $s \in [0, 1]$, \eqref{ode:pi} admits a unique solution $\Pi(t)$ on $[0, 1]$ if and only if
\begin{equation} \label{cond:pi-exist}
N(0, s)^{-1} \prec \Pi(s) \prec N(1, s)^{-1},
\end{equation}
where $N(0, 0^{+})^{-1} = - \infty$ and $N(1, 1^{-})^{-1} = + \infty$\footnote{Positive infinity of the $n \times n$ positive semidefinite cone, written $+ \infty$, is the limit of a sequence of $n \times n$ positive definite matrices whose eigenvalues all grow to $+ \infty$. 
Likewise, for $- \infty$. 
Notice that as $s \to 0^{+}$, all eigenvalues of $N(0, s)$ go to $0^{-}$, and thus all eigenvalues of $N(0, s)^{-1}$ go to $- \infty$. 
Likewise, as $s \to 1^{-}$, all eigenvalues of $N(1, s)$ go to $0^{+}$, and thus all eigenvalues of $N(1, s)^{-1}$ go to $+ \infty$.}. 
Moreover, 
\begin{align} \label{sol:pi}
\Pi(t) &= \Phi_{A}(s, t)\t \Pi(s) \big(I_{n} - N(t, s) \Pi(s)\big)^{-1} \Phi_{A}(s, t) \\
&= \Phi_{A}(s, t)\t \big(I_{n} - \Pi(s) N(t, s)\big)^{-1} \Pi(s) \Phi_{A}(s, t), \nonumber
\end{align}
and 
\begin{equation} \label{ineq:pi-bd}
N(0, t)^{-1} \prec \Pi(t) \prec N(1, t)^{-1}, \quad t \in [0, 1].
\end{equation}
\end{lemma}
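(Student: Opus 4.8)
The plan is to linearize the matrix Riccati equation \eqref{ode:pi} and solve it in closed form, then translate the interval of existence of that solution into the spectral condition \eqref{cond:pi-exist}. With $R\equiv I_{p}$, I would associate to \eqref{ode:pi} the block-triangular Hamiltonian system
\begin{equation*}
\frac{\dd}{\dd t}\begin{bmatrix} X(t) \\ Y(t) \end{bmatrix} = \begin{bmatrix} A(t) & -B(t)B\t(t) \\ 0 & -A\t(t) \end{bmatrix} \begin{bmatrix} X(t) \\ Y(t) \end{bmatrix},
\end{equation*}
and check that $\Pi(t)=Y(t)X^{-1}(t)$ solves \eqref{ode:pi} wherever $X(t)$ is invertible. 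Choosing the initial data $X(s)=I_{n}$, $Y(s)=\Pi(s)$, the lower block integrates to $Y(t)=\Phi_{A}(s,t)\t\Pi(s)$ (using $\Phi_{-A\t}(t,s)=\Phi_{A}(s,t)\t$), and variation of constants for the upper block, together with the factorization $\Phi_{A}(t,\tau)=\Phi_{A}(t,s)\Phi_{A}(s,\tau)$, collapses the forcing integral into $\Phi_{A}(t,s)N(t,s)$, yielding $X(t)=\Phi_{A}(t,s)\big(I_{n}-N(t,s)\Pi(s)\big)$. Forming $Y(t)X^{-1}(t)$ then reproduces \eqref{sol:pi}, and the second expression follows from the push-through identity $\Pi(s)(I_{n}-N(t,s)\Pi(s))^{-1}=(I_{n}-\Pi(s)N(t,s))^{-1}\Pi(s)$. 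Symmetry of $\Pi(t)$, and its being \emph{the} unique solution, follow because \eqref{ode:pi} has a smooth (quadratic) right-hand side, so Picard--Lindel\"of applies and $\Pi\t$ solves the same equation.

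Next I would reduce existence of $\Pi$ on all of $[0,1]$ to invertibility of $X(t)$, i.e.\ of $I_{n}-N(t,s)\Pi(s)$, for every $t\in[0,1]$. For $t\neq s$ the Gramian $N(t,s)$ is invertible, so $I_{n}-N(t,s)\Pi(s)=N(t,s)\big(N(t,s)^{-1}-\Pi(s)\big)$ is singular exactly when $N(t,s)^{-1}-\Pi(s)$ is. The crucial structural fact is that $N(t,s)$ is monotonically increasing in $t$ (its $t$-derivative is $\Phi_{A}(s,t)B(t)B\t(t)\Phi_{A}(s,t)\t\succeq0$), strictly so by total controllability, passing through $0$ at $t=s$; hence $N(t,s)\succ0$ for $t>s$, $N(t,s)\prec0$ for $t<s$, and $N(t,s)^{-1}$ is order-reversing. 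On the branch $t>s$ one shows $N(t,s)^{-1}-\Pi(s)$ stays nonsingular for all $t\in(s,1]$ iff $\Pi(s)\prec N(1,s)^{-1}$: sufficiency is immediate from $N(t,s)^{-1}\succeq N(1,s)^{-1}\succ\Pi(s)$, while necessity follows by tracking $\lambda_{\min}\big(N(t,s)^{-1}-\Pi(s)\big)$, which tends to $+\infty$ as $t\to s^{+}$ yet is $\le 0$ at $t=1$ when the bound fails, so the intermediate value theorem forces a singularity. The symmetric analysis on $t<s$ (where $N(t,s)^{-1}\preceq N(0,s)^{-1}\prec0$) yields nonsingularity throughout $[0,s)$ iff $\Pi(s)\succ N(0,s)^{-1}$. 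Intersecting the two branches gives exactly \eqref{cond:pi-exist}, with the endpoint conventions $N(0,0^{+})^{-1}=-\infty$ and $N(1,1^{-})^{-1}=+\infty$ covering the degenerate cases $s=0$ and $s=1$, where one constraint becomes vacuous.

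Finally, for the uniform bound \eqref{ineq:pi-bd} I would invoke the semigroup/uniqueness property: if $\Pi$ solves \eqref{ode:pi} on all of $[0,1]$, then for any $\tau\in[0,1]$ the same function is recovered from \eqref{sol:pi} with $s$ replaced by $\tau$ and initial datum $\Pi(\tau)$, and it exists on all of $[0,1]$. Applying the equivalence just proved at the base point $\tau$ shows $\Pi(\tau)$ must satisfy $N(0,\tau)^{-1}\prec\Pi(\tau)\prec N(1,\tau)^{-1}$, which is \eqref{ineq:pi-bd}. Alternatively this can be verified directly from \eqref{sol:pi} using the cocycle identity $N(t_{1},t_{0})=N(t_{2},t_{0})+\Phi_{A}(t_{0},t_{2})N(t_{1},t_{2})\Phi_{A}(t_{0},t_{2})\t$.

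The main obstacle I anticipate is the necessity half of the spectral characterization: one must argue, on each side of $s$ separately and with the correct sign of $N(t,s)$, that failure of the strict sandwich forces $X(t)$ to become singular at some interior time, which manifests as a finite escape time of the Riccati solution. Handling the limits $N(t,s)^{-1}\to\pm\infty$ as $t\to s^{\pm}$ rigorously (making precise the $\pm\infty$ conventions of the footnote) and ensuring the eigenvalue-continuity/IVT argument delivers an \emph{exact} singularity rather than merely an indefinite matrix are the delicate points.
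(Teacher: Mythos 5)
Your proposal is correct, and its skeleton matches the paper's: reduce global existence of \eqref{ode:pi} to invertibility of $I_{n}-N(t,s)\Pi(s)$ for all $t$, characterize that invertibility by the sandwich \eqref{cond:pi-exist}, then propagate the bound to get \eqref{ineq:pi-bd}. The differences are in how each block is discharged. For the solution formula \eqref{sol:pi} the paper simply cites a reference for the linear-fractional representation, whereas you rederive it from the block-triangular Hamiltonian system; that is the standard underlying argument and makes the step self-contained. For the invertibility characterization the paper works with $M(t)=I_{n}-N(t,s)^{\frac{1}{2}}\Pi(s)N(t,s)^{\frac{1}{2}}$ and a continuity-from-$M(s)=I_{n}$ argument, while you factor out $N(t,s)$ and track $\lambda_{\min}\big(N(t,s)^{-1}-\Pi(s)\big)$ with the intermediate value theorem; these are essentially the same continuity/definiteness idea in congruent coordinates, and both correctly handle the two branches $t>s$ and $t<s$. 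The most genuine divergence is in \eqref{ineq:pi-bd}: the paper observes that $N(r,t)^{-1}$ satisfies the same Riccati equation and invokes the monotonicity (comparison) theorem for matrix Riccati ODEs, whereas you re-base the already-proved equivalence at an arbitrary $\tau\in[0,1]$ using uniqueness of the flow. Your route avoids the external comparison theorem entirely and is arguably cleaner; the paper's route buys the same conclusion with one citation and also sets up the limiting arguments used later in Lemma \ref{lem:sigma1}. The delicate points you flag (exact singularity via IVT rather than mere indefiniteness, and the $\pm\infty$ endpoint conventions) are handled adequately by your sketch, since a nonpositive $\lambda_{\min}$ at $t=1$ combined with $\lambda_{\min}\to+\infty$ as $t\to s^{+}$ does force a zero crossing in $(s,1]$.
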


\begin{proof}
First, recall that \eqref{ode:pi} admits a unique solution $\Pi(t)$ on $[0, 1]$ if and only if $\Phi_{A}(t, s) - \Phi_{A}(t, s) N(t, s) \Pi(s)$ is invertible for all $t \in [0, 1]$, and the solution $\Pi(t)$ is given by \eqref{sol:pi}~\cite{kilicaslan2010existence}.

Next, we show \eqref{cond:pi-exist}. 
Since $\Phi_{A}(t, s)$ is invertible, it suffices to show that $I_{n} - N(t, s) \Pi(s)$ is invertible for all $t \in [0, 1]$ if and only if \eqref{cond:pi-exist} holds.

(Necessity) Let $I_{n} - N(t, s) \Pi(s)$ be invertible for all $t \in [0, 1]$. 
Then, $M(t) \triangleq I_{n} - N(t, s)^{\frac{1}{2}} \Pi(s) N(t, s)^{\frac{1}{2}}$ is invertible for all $t \in [s, 1]$. 
Since $M(s) = I_{n} \succ 0$ and $M(t)$ is continuous in $t \in [s, 1]$, it follows that $M(t) \succ 0$ for all $t \in [s, 1]$. 
In particular, $M(1) \succ 0$, that is, $I_{n} \succ N(1, s)^{\frac{1}{2}} \Pi(s) N(1, s)^{\frac{1}{2}}$. 
Hence, $\Pi(s) \prec N(1, s)^{-1}$. 
The inequality $\Pi(s) \succ N(0, s)^{-1}$ can be shown in an analogous way, and thus is omitted.

(Sufficiency) Let \eqref{cond:pi-exist} hold. 
We will show that $I_{n} - N(t, s) \Pi(s)$ is invertible for all $t \in [0, 1]$. 
When $t = s$, we have $I_{n} - N(s, s) \Pi(s) = I_{n}$, which is invertible. 
When $t \in (s, 1]$, and since $0 \prec N(t, s) \preceq N(1, s)$, it follows from \eqref{cond:pi-exist} that $N(t, s)^{-1} \succeq N(1, s)^{-1} \succ \Pi(s)$. 
Hence, by multiplying $N(t, s)^{\frac{1}{2}}$ on both sides of the above inequality, we obtain $I_{n} \succ N(t, s)^{\frac{1}{2}} \Pi(s) N(t, s)^{\frac{1}{2}}$. 
Since $N(t, s)^{\frac{1}{2}} \Pi(s) N(t, s)^{\frac{1}{2}}$ is similar to $N(t, s) \Pi(s)$, all the eigenvalues of $N(t, s) \Pi(s)$ are strictly less than $1$. 
Therefore, $I_{n} - N(t, s) \Pi(s)$ is invertible for all $t \in (s, 1]$. 
The case when $t \in [0, s)$ can be shown in an analogous way, and thus is omitted.

Next, we show \eqref{ineq:pi-bd}. Notice that, for any fixed $r$, $N(r, t)^{-1}$ satisfies the same Riccati differential equation as $\Pi(t)$, that is,
\begin{align} \label{ode:RiccN}
\frac{\partial}{\partial t} N(r, t)^{-1} &= - A\t(t) N(r, t)^{-1} - N(r, t)^{-1} A(t) \nonumber \\* 
&\qquad + N(r, t)^{-1} B(t) B\t(t) N(r, t)^{-1}.
\end{align}
The monotonicity of the matrix Riccati differential equation \cite{freiling1996generalized} implies that, 
for any $s \in [0, 1]$ such that $N(0, s)^{-1} \prec \Pi(s) \prec N(1, s)^{-1}$,
it follows that, for all $t \in [0, 1]$, $N(0, t)^{-1} \prec \Pi(t) \prec N(1, t)^{-1}$.  \hfill
\end{proof}


\begin{corollary} \label{cor:pi-exist}
Assume that $\big(A(t), B(t)\big)$ is totally controllable for $t \in \R$. 
Let $\mathcal{I}_{s} \subset \R$ be the maximal interval of existence of the solution to \eqref{ode:pi}, starting from $\Pi(s) = \Pi_{s}$. Then, $\mathcal{I}_{s} = (t_{0}, t_{1})$, where
\begin{align*}
t_{0} &\triangleq \inf \big\{t \, \big| \, t < s,~ N(t, s)^{-1} \prec \Pi_{s}\big\}, \\
t_{1} &\triangleq \sup \big\{t \, \big| \, t > s,~ N(t, s)^{-1} \succ \Pi_{s}\big\}.
\end{align*}
\end{corollary}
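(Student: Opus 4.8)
The plan is to leverage the explicit solution formula \eqref{sol:pi} together with the invertibility analysis already carried out in the proof of Lemma~\ref{lem:pi-exist-sol}, now applied over arbitrary subintervals of $\R$ rather than the fixed interval $[0,1]$. Since $\Phi_{A}(s,t)$ is invertible for every $t$, the Riccati solution of \eqref{ode:pi} through $\Pi(s)=\Pi_{s}$ exists and equals \eqref{sol:pi} exactly as long as $I_{n}-N(t,s)\Pi_{s}$ remains invertible. By standard ODE theory the maximal interval of existence $\mathcal{I}_{s}$ is open, and on $\mathcal{I}_{s}$ the solution is unique and must coincide with \eqref{sol:pi}. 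Hence the whole task reduces to identifying the connected component containing $s$ of the set $\{t\in\R:\,I_{n}-N(t,s)\Pi_{s}\text{ is invertible}\}$ and to showing that the solution genuinely breaks down at its endpoints.

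First I would characterize this invertibility region. Because $\big(A(t),B(t)\big)$ is totally controllable on $\R$, one has $N(t,s)\succ0$ for $t>s$ and $N(t,s)\prec0$ for $t<s$. Mirroring the argument in Lemma~\ref{lem:pi-exist-sol}, for $t>s$ I set $M(t)\triangleq I_{n}-N(t,s)^{1/2}\Pi_{s}N(t,s)^{1/2}$, which is similar to $I_{n}-N(t,s)\Pi_{s}$ and hence invertible at the same times; since $M(s)=I_{n}\succ0$ and $M$ is continuous, on the component containing $s$ invertibility is equivalent to $M(t)\succ0$, i.e.\ to $N(t,s)^{-1}\succ\Pi_{s}$. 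An analogous computation with $-N(t,s)\succ0$ shows that for $t<s$ invertibility on the component is equivalent to $N(t,s)^{-1}\prec\Pi_{s}$. Next I invoke monotonicity: $N(t,s)=\int_{s}^{t}\Phi_{A}(s,\tau)B(\tau)B\t(\tau)\Phi_{A}(s,\tau)\t\,\dd\tau$ is strictly increasing in $t$ (its integrand is positive definite by total controllability), so $N(t,s)^{-1}$ is strictly decreasing for $t>s$. Consequently the condition $N(t,s)^{-1}\succ\Pi_{s}$, once violated, stays violated, whence $\{t>s:\,N(t,s)^{-1}\succ\Pi_{s}\}=(s,t_{1})$ with $t_{1}$ its supremum; symmetrically $\{t<s:\,N(t,s)^{-1}\prec\Pi_{s}\}=(t_{0},s)$. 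This identifies the component as $(t_{0},t_{1})$ and gives $(t_{0},t_{1})\subseteq\mathcal{I}_{s}$.

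It remains to prove maximality, namely that the solution cannot be continued to $t_{1}$ (nor to $t_{0}$). Suppose $t_{1}<\infty$ and, for contradiction, that $\Pi(t)$ stays bounded as $t\uparrow t_{1}$. Since $t_{1}$ is the supremum of a strict condition, $N(t_{1},s)^{-1}\not\succ\Pi_{s}$ while $N(t,s)^{-1}\succ\Pi_{s}$ for $t<t_{1}$; by continuity the largest eigenvalue of $N(t,s)\Pi_{s}$ increases to exactly $1$, so $I_{n}-N(t_{1},s)\Pi_{s}$ is singular. To see that this forces $\|\Pi(t)\|\to\infty$, I write $\Psi(t)\triangleq\Pi_{s}\big(I_{n}-N(t,s)\Pi_{s}\big)^{-1}$ and use the congruence identity $N(t,s)^{1/2}\Psi(t)N(t,s)^{1/2}=S(t)\big(I_{n}-S(t)\big)^{-1}$, where $S(t)\triangleq N(t,s)^{1/2}\Pi_{s}N(t,s)^{1/2}$; as $\lambda_{\max}(S(t))\uparrow1$ the right-hand side has an eigenvalue diverging to $+\infty$, and since $N(t_{1},s)^{1/2}$ is finite and invertible this forces $\|\Psi(t)\|\to\infty$, hence $\|\Pi(t)\|\to\infty$ through \eqref{sol:pi}, a contradiction. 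Thus $\sup\mathcal{I}_{s}=t_{1}$, and the same reasoning at the lower end gives $\inf\mathcal{I}_{s}=t_{0}$, so $\mathcal{I}_{s}=(t_{0},t_{1})$. The main obstacle is precisely this last step: certifying that the breakdown of formula \eqref{sol:pi} is a genuine finite-time escape of the Riccati flow rather than a removable coordinate singularity. This is delicate when $\Pi_{s}$ is singular, and the congruence reduction to the scalar behavior of $\sigma/(1-\sigma)$ as $\sigma\uparrow1$ is what makes it rigorous, together with carefully tracking the sign change of $N(t,s)$ across $t=s$ and the $\pm\infty$ endpoint conventions inherited from Lemma~\ref{lem:pi-exist-sol}.
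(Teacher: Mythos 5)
Your proof is correct and follows essentially the route the paper intends: the corollary is stated there without proof as an immediate consequence of Lemma~\ref{lem:pi-exist-sol}, and your argument is precisely that lemma's invertibility criterion applied on arbitrary subintervals of $\R$, supplemented by a blow-up verification at the endpoints that the paper leaves implicit but that is needed to certify the interval is truly maximal. One small slip: the integrand $\Phi_{A}(s,\tau)B(\tau)B\t(\tau)\Phi_{A}(s,\tau)\t$ is only positive \emph{semi}definite pointwise; strict monotonicity of $N(\cdot,s)$ follows instead from total controllability on the subinterval, via $N(t_{2},s)-N(t_{1},s)=\Phi_{A}(s,t_{1})N(t_{2},t_{1})\Phi_{A}(s,t_{1})\t\succ 0$ for $s\leq t_{1}<t_{2}$ --- and in any case non-strict monotonicity already suffices for your ``once violated, stays violated'' step.
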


\subsection{Solution to the State Covariance}

Armed with the complete solution of $\Pi(t)$, we obtain an explicit expression for the state transition matrix $\Phi_{A-BB\t \Pi}(t, s)$, which satisfies $\Phi_{A-BB\t\Pi}(s, s) = I_{n}$, and 
\begin{equation*}
\frac{\partial}{\partial t} \Phi_{A-BB\t\Pi}(t, s) = \hspace{-0.4mm} \big(A(t) - B(t) B\t(t) \Pi(t)\big) \Phi_{A-BB\t\Pi}(t, s).
\end{equation*}

\begin{lemma} \label{lem:phi-AB-pi}
Let condition \eqref{cond:pi-exist} hold, so that $\Pi(t)$ exists on $[0, 1]$. 
The state transition matrix of $A(t)-B(t) B\t(t) \Pi(t)$ is given, for $s, t \in [0, 1]$, by
\begin{equation} \label{eqn:phi-AB-pi}
\Phi_{A-BB\t\Pi}(t, s) = \Phi_{A}(t, s) - \Phi_{A}(t, s) N(t, s) \Pi(s). 
\end{equation}
\end{lemma}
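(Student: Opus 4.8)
The plan is to verify directly that the right-hand side of \eqref{eqn:phi-AB-pi} satisfies the two defining properties of the state transition matrix of $A(t) - B(t)B\t(t)\Pi(t)$, and then to invoke uniqueness of solutions of linear matrix ODEs. Concretely, I would introduce the candidate
\[
\Psi(t, s) \triangleq \Phi_{A}(t, s)\big(I_{n} - N(t, s)\Pi(s)\big),
\]
which is the proposed expression in \eqref{eqn:phi-AB-pi} rewritten in factored form, and then show that $\Psi(s, s) = I_{n}$ and that $\partial_{t}\Psi(t, s) = \big(A(t) - B(t)B\t(t)\Pi(t)\big)\Psi(t, s)$ for all $t \in [0, 1]$. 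Note that $\Pi(t)$ is well defined on $[0, 1]$ precisely because condition \eqref{cond:pi-exist} is assumed to hold, so the closed-form \eqref{sol:pi} is available throughout.

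The initial condition is immediate, since $N(s, s) = 0$ gives $\Psi(s, s) = \Phi_{A}(s, s) = I_{n}$. For the differential equation I would differentiate $\Psi$ using $\partial_{t}\Phi_{A}(t, s) = A(t)\Phi_{A}(t, s)$ together with the fundamental theorem of calculus applied to the definition \eqref{eqn:N-def}, which yields $\partial_{t} N(t, s) = \Phi_{A}(s, t)B(t)B\t(t)\Phi_{A}(s, t)\t$; this formula is valid for both $t > s$ and $t < s$, covering the full range $s, t \in [0, 1]$. Applying the product rule and simplifying with the composition identity $\Phi_{A}(t, s)\Phi_{A}(s, t) = I_{n}$, I expect to arrive at
\[
\partial_{t}\Psi(t, s) = A(t)\Psi(t, s) - B(t)B\t(t)\Phi_{A}(s, t)\t\Pi(s).
\]

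Comparing this with the target expression $\big(A(t) - B(t)B\t(t)\Pi(t)\big)\Psi(t, s) = A(t)\Psi(t, s) - B(t)B\t(t)\Pi(t)\Psi(t, s)$, the whole claim collapses to the single algebraic identity
\[
\Pi(t)\Psi(t, s) = \Phi_{A}(s, t)\t\Pi(s),
\]
and verifying this is the main (indeed the only non-routine) step. I would establish it by substituting the closed-form solution \eqref{sol:pi} for $\Pi(t)$ into $\Pi(t)\Psi(t, s)$; the two middle transition matrices cancel via $\Phi_{A}(s, t)\Phi_{A}(t, s) = I_{n}$, and the factor $\big(I_{n} - N(t, s)\Pi(s)\big)$ then annihilates its inverse, leaving exactly $\Phi_{A}(s, t)\t\Pi(s)$. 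With the initial condition and the ODE both confirmed, $\Psi(t, s)$ and $\Phi_{A-BB\t\Pi}(t, s)$ solve the same linear matrix initial value problem, so uniqueness forces them to coincide, which is \eqref{eqn:phi-AB-pi}. The one point requiring care is bookkeeping the sign and argument ordering in $\partial_{t} N(t, s)$, since $N(t, s)$ is indefinite when $t < s$; but the derivative formula and the subsequent cancellations are insensitive to this sign, so no separate case analysis should be needed.
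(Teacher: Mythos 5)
Your proposal is correct and follows essentially the same route as the paper: verify that the candidate $\Phi_{A}(t,s)\big(I_{n}-N(t,s)\Pi(s)\big)$ satisfies the initial condition and the defining ODE, using $\partial_{t}N(t,s)=\Phi_{A}(s,t)B(t)B\t(t)\Phi_{A}(s,t)\t$ together with the closed form \eqref{sol:pi} for $\Pi(t)$, and conclude by uniqueness. Your explicit isolation of the key cancellation $\Pi(t)\Psi(t,s)=\Phi_{A}(s,t)\t\Pi(s)$ is exactly the computation the paper compresses into the phrase ``in view of \eqref{sol:pi}.''
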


\begin{proof}
One can readily check that
\begin{equation*}
\Phi_{A}(s, s) - \Phi_{A}(s, s) N(s, s) \Pi(s) = I_{n}.
\end{equation*}
Since,
\begin{equation*}
\frac{\partial}{\partial t} N(t, s) = \Phi_{A}(s, t) B(t) B\t(t) \Phi_{A}(s, t)\t,
\end{equation*}
in view of \eqref{sol:pi}, we obtain that
\begin{align*}
&\frac{\partial}{\partial t} \big(\Phi_{A}(t, s) - \Phi_{A}(t, s) N(t, s) \Pi(s)\big) \\
&= \big(A(t) - B(t) B\t(t) \Pi(t)\big) \\* 
&\qquad\qquad \times \big(\Phi_{A}(t, s) - \Phi_{A}(t, s) N(t, s) \Pi(s)\big).
\end{align*}
This completes the proof. \hfill 
\end{proof}

\begin{remark}
If \eqref{cond:pi-exist} holds, in view of \eqref{sol:pi} and \eqref{eqn:phi-AB-pi}, we have
\begin{align*}
\Pi(t) &= \Phi_{A}(s, t)\t \Pi(s) \Phi_{A-BB\t\Pi}(s, t) \\
&= \Phi_{A-BB\t\Pi}(s, t)\t \Pi(s) \Phi_{A}(s, t), \quad s, t \in [0, 1].
\end{align*}
\end{remark}


The next result provides an explicit solution of $\Sigma(t)$ over the interval $[0, 1]$.

\begin{lemma} \label{lem:sigma1}
Let condition \eqref{cond:pi-exist} hold so that $\Pi(t)$ exists on $[0, 1]$. Given $\Sigma(0) = \Sigma_{0} \succeq 0$, then, for $t \in [0, 1]$,
\begin{align} \label{sol:sigma}
&\Sigma(t) = \Phi_{A-BB\t\Pi}(t, 0) \Sigma_{0} \Phi_{A-BB\t\Pi}(t, 0)\t 
\nonumber \\* 
&+ \int_{0}^{t} \Phi_{A-BB\t\Pi}(t, s) C(s) D(s) C\t(s) \Phi_{A-BB\t\Pi}(t, s)\t \, \dd s,
\end{align}
where $\Phi_{A-BB\t\Pi}(t, s)$ is given by \eqref{eqn:phi-AB-pi} and $\Pi(t)$ is given by \eqref{sol:pi}. 
In particular, for any $s \in [0, 1]$, as $\Pi(s) \to N(1, s)^{-1}$, then $\Sigma(1) \to 0_{n \times n}$. 
When $\Sigma_{0} \succ 0$, as $\Pi(s) \to N(0, s)^{-1}$, then $\Sigma(1) \to +\infty$.
\end{lemma}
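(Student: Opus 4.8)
The plan is to dispatch the three assertions separately. The formula \eqref{sol:sigma} is the variation-of-constants representation of the linear matrix differential equation \eqref{ode:sigma} (with $R\equiv I_p$), whose inhomogeneous term is $C(t)D(t)C\t(t)$ and whose flow is generated by $A(t)-B(t)B\t(t)\Pi(t)$. Since Lemma \ref{lem:phi-AB-pi} supplies the corresponding state transition matrix $\Phi_{A-BB\t\Pi}(t,s)$ in closed form, I would verify \eqref{sol:sigma} by differentiating in $t$ via the Leibniz rule, using $\tfrac{\partial}{\partial t}\Phi_{A-BB\t\Pi}(t,s)=\big(A(t)-B(t)B\t(t)\Pi(t)\big)\Phi_{A-BB\t\Pi}(t,s)$ and $\Phi_{A-BB\t\Pi}(t,t)=I_n$, and by checking the initial condition $\Sigma(0)=\Sigma_0$. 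This step is routine.

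For both limits I would set $t=1$ in \eqref{sol:sigma} and analyze the two resulting terms through the closed form \eqref{eqn:phi-AB-pi}, $\Phi_{A-BB\t\Pi}(1,\tau)=\Phi_A(1,\tau)\big(I_n-N(1,\tau)\Pi(\tau)\big)$. The basic mechanism is that driving $\Pi(s)$ to an endpoint of \eqref{cond:pi-exist} drives the whole Riccati trajectory to the corresponding extremal solution: since $N(1,\cdot)^{-1}$ and $N(0,\cdot)^{-1}$ solve \eqref{ode:pi} (cf. \eqref{ode:RiccN}), as $\Pi(s)\to N(1,s)^{-1}$ one has $\Pi(\tau)\to N(1,\tau)^{-1}$ for each $\tau$, whence $I_n-N(1,\tau)\Pi(\tau)\to 0$ and $\Phi_{A-BB\t\Pi}(1,\tau)\to 0$ pointwise on $[0,1)$; in particular $\Phi_{A-BB\t\Pi}(1,0)\to 0$, so the $\Sigma_0$-term in \eqref{sol:sigma} vanishes. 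Writing $E(\tau)\triangleq N(1,\tau)^{-1}-\Pi(\tau)\succ 0$ (positive by \eqref{ineq:pi-bd}), the useful identity $I_n-N(1,\tau)\Pi(\tau)=N(1,\tau)E(\tau)$ exhibits the integrand of \eqref{sol:sigma} as quadratic in the small quantity $E(\tau)$.

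The main obstacle is the noise term: upgrading the pointwise vanishing of $\Phi_{A-BB\t\Pi}(1,\tau)C(\tau)$ to vanishing of $\int_0^1\Phi_{A-BB\t\Pi}(1,\tau)C(\tau)D(\tau)C\t(\tau)\Phi_{A-BB\t\Pi}(1,\tau)\t\,\dd\tau$. This is delicate precisely near $\tau=1$, where the limiting gain $N(1,\tau)^{-1}$ blows up, so that a crude Gronwall bound on $\Phi_{A-BB\t\Pi}(1,\tau)$ diverges and cannot feed a dominated-convergence argument directly. The plan is to keep the cancellation intact by factoring $\Phi_{A-BB\t\Pi}(1,\tau)=\Phi_A(1,\tau)N(1,\tau)^{1/2}\big(I_n-N(1,\tau)^{1/2}\Pi(\tau)N(1,\tau)^{1/2}\big)N(1,\tau)^{-1/2}$: the middle factor is symmetric and, by \eqref{ineq:pi-bd}, satisfies $0\prec I_n-N(1,\tau)^{1/2}\Pi(\tau)N(1,\tau)^{1/2}\prec I_n$ near $\tau=1$, the leading factor obeys $\big\|\Phi_A(1,\tau)N(1,\tau)^{1/2}\big\|^2=\big\|\int_\tau^1\Phi_A(1,\sigma)B(\sigma)B\t(\sigma)\Phi_A(1,\sigma)\t\,\dd\sigma\big\|\to 0$, and I would show that these decays balance the growth of $N(1,\tau)^{-1/2}$ against the data $C(\tau)$ so as to force the integral to $0$. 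Concretely I would split $\int_0^1$ at $1-\eta$, use uniform convergence to $0$ on $[0,1-\eta]$, estimate the tail, and let $\eta\to 0$. I expect this balance to be the crux of the whole lemma; the scalar prototype $A=0,\ B=1$ makes it transparent, giving $\Phi_{A-BB\t\Pi}(1,\tau)=(t_f-1)/(t_f-\tau)\le 1$ with $t_f\downarrow 1$ the blow-up time of $\Pi$ and $\int_0^1\Phi_{A-BB\t\Pi}(1,\tau)^2\,\dd\tau=O(t_f-1)\to 0$.

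Finally, for the case $\Sigma_0\succ 0$ with $\Pi(s)\to N(0,s)^{-1}$, I would use that the noise integral in \eqref{sol:sigma} is positive semidefinite, so $\Sigma(1)\succeq\Phi_{A-BB\t\Pi}(1,0)\Sigma_0\Phi_{A-BB\t\Pi}(1,0)\t\succeq\lambda_{\min}(\Sigma_0)\,\sigma_{\min}\!\big(\Phi_{A-BB\t\Pi}(1,0)\big)^2 I_n$. It then suffices to prove $\sigma_{\min}\big(\Phi_{A-BB\t\Pi}(1,0)\big)\to+\infty$, equivalently $\Phi_{A-BB\t\Pi}(1,0)^{-1}=\Phi_{A-BB\t\Pi}(0,1)\to 0$. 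By \eqref{eqn:phi-AB-pi}, $\Phi_{A-BB\t\Pi}(0,1)=\Phi_A(0,1)\big(I_n-N(0,1)\Pi(1)\big)$, and as $\Pi(s)\to N(0,s)^{-1}$ the Riccati flow gives $\Pi(1)\to N(0,1)^{-1}$, so $I_n-N(0,1)\Pi(1)\to 0$ and hence $\Phi_{A-BB\t\Pi}(0,1)\to 0$. Thus $\Sigma(1)\to+\infty$, which completes the plan.
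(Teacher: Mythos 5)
Your overall route coincides with the paper's: verify \eqref{sol:sigma} by differentiating the variation-of-constants formula, propagate the boundary behavior of $\Pi(s)$ to the whole trajectory by comparing \eqref{ode:pi} with \eqref{ode:RiccN}, and then pass to the limit inside \eqref{sol:sigma}. Your treatment of the $\Sigma(1)\to+\infty$ case is correct and in fact slightly sharper than the paper's: lower-bounding $\Sigma(1)$ by the $\Sigma_{0}$-term and showing $\Phi_{A-BB\t\Pi}(0,1)=\Phi_{A}(0,1)\big(I_{n}-N(0,1)\Pi(1)\big)\to 0$, hence $\sigma_{\min}\big(\Phi_{A-BB\t\Pi}(1,0)\big)\to\infty$, is clean (the paper simply notes $\Pi(0)\to-\infty$ and concludes).

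The gap is exactly where you locate the crux, and the plan as written does not close it. For the noise integral you propose the factorization $\Phi_{A-BB\t\Pi}(1,\tau)=\Phi_{A}(1,\tau)N(1,\tau)^{1/2}\big(I-N(1,\tau)^{1/2}\Pi(\tau)N(1,\tau)^{1/2}\big)N(1,\tau)^{-1/2}$ and hope that the decay of $\Phi_{A}(1,\tau)N(1,\tau)^{1/2}$ balances the growth of $N(1,\tau)^{-1/2}$. Taking norms factor by factor, this yields only $\|\Phi_{A-BB\t\Pi}(1,\tau)\|\lesssim\sqrt{\kappa\big(N(1,\tau)\big)}$, where $\kappa$ is the condition number, so the integrand is controlled by $\kappa\big(N(1,\tau)\big)\,\|C(\tau)D(\tau)C\t(\tau)\|$, which is \emph{not} integrable near $\tau=1$ whenever $p<n$: for the double integrator $A=\left[\begin{smallmatrix}0&1\\0&0\end{smallmatrix}\right]$, $B=[0\;\;1]\t$, one has $\lambda_{\max}\big(N(1,\tau)\big)\sim(1-\tau)$ and $\lambda_{\min}\big(N(1,\tau)\big)\sim(1-\tau)^{3}$, so $\kappa\sim(1-\tau)^{-2}$. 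Your scalar prototype is precisely the case in which this difficulty is invisible ($\kappa\equiv 1$), so it does not validate the plan; the tail estimate must exploit more than operator norms of the separate factors (for instance the structure $I-N(1,\tau)\Pi(\tau)=N(1,\tau)\big(N(1,\tau)^{-1}-\Pi(\tau)\big)$ together with $0\prec N(1,\tau)^{-1}-\Pi(\tau)\preceq N(1,\tau)^{-1}-\Pi_{\mathrm{ref}}(\tau)$ for a fixed reference solution, resolved by direction). To be fair, the paper's own proof is no more explicit here—it invokes the dominated convergence theorem without exhibiting a dominating function—so you have correctly identified the weak point; you just have not repaired it. A minor additional slip: $I-N^{1/2}\Pi N^{1/2}\prec I$ would require $\Pi(\tau)\succ 0$, which \eqref{ineq:pi-bd} does not provide (it only gives $\Pi(\tau)\succ N(0,\tau)^{-1}$, a negative definite bound); the middle factor is still bounded near $\tau=1$, but by $I-N^{1/2}N(0,\tau)^{-1}N^{1/2}$ rather than by $I$.
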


\begin{proof} 
It is not difficult to check that \eqref{sol:sigma} follows directly from \eqref{ode:sigma}. 
By \eqref{sol:pi} and the continuity of $\Pi$, 
it follows that, as $\Pi(s) \to N(1, s)^{-1}$, 
then the corresponding solutions of the Riccati equations for $\Pi(t)$ and $N(1, t)^{-1}$ from \eqref{ode:pi} and \eqref{ode:RiccN}, respectively, 
satisfy $\Pi(t) \to N(1, t)^{-1}$ for each $t \in [0, 1]$. 
By the dominated convergence theorem, it follows that, as $\Pi(t) \to N(1, t)^{-1}$ for all $t \in [0, 1]$, then $\Sigma(1) \to 0_{n \times n}$. 
Similarly, as $\Pi(s) \to N(0, s)^{-1}$, then $\Pi(t) \to N(0, t)^{-1}$ for each $t \in [0, 1]$. 
When $\Sigma_{0} \succ 0$, as $\Pi(0) \to N(0, 0^{+})^{-1} = -\infty$, then $\Sigma(1) \to +\infty$. \hfill
\end{proof}

For the special case when $\Sigma_{0} \succ 0$ and $C(t) D(t) C\t(t) = B(t) B\t(t)$ for all $t \in [0, 1]$, we have that
\begin{align*}
&\int_{0}^{1} \Phi_{A-BB\t\Pi}(1, \tau) B(\tau) B\t(\tau) \Phi_{A-BB\t\Pi}(1, \tau)\t \, \dd\tau \\
&= \int_{0}^{1} \big(\Phi_{A}(1, \tau) - \Phi_{A}(1, \tau) N(1, \tau) \Pi(\tau)\big) B(\tau) B\t(\tau) \\* 
&\qquad\qquad \times \big( \Phi_{A}(1, \tau)\t - \Pi(\tau) N(1, \tau) \Phi_{A}(1, \tau)\t \big) \, \dd \tau \\
&= - N(0, 1) + \Phi_{A}(1, t) N(1, t) \Pi(t) N(1, t) \Phi_{A}(1, t)\t \big|_{0}^{1} \\
&= - N(0, 1) - \Phi_{A}(1, 0) N(1, 0) \Pi(0) N(1, 0) \Phi_{A}(1, 0)\t.
\end{align*}
For notational simplicity, in the sequel let $N(1, 0)$ be denoted by $N_{10}$, and let $\Pi(0)$ be denoted by $\Pi_{0}$. 
Then,
\begin{align*}
&\Phi_{A}(0, 1) \Sigma(1) \Phi_{A}(0, 1)\t 
= N_{10}  (N_{10}^{-1} - \Pi_{0} ) N_{10} \\* 
& \hspace{28mm} + N_{10} (N_{10}^{-1} - \Pi_{0}) \Sigma_{0} (N_{10}^{-1} - \Pi_{0}) N_{10} \\
& \hspace{3mm} = N_{10} \Sigma_{0}^{-\frac{1}{2}} \bigg(\Sigma_{0}^{\frac{1}{2}} (N_{10}^{-1} - \Pi_{0}) \Sigma_{0} (N_{10}^{-1} - \Pi_{0}) 
\Sigma_{0}^{\frac{1}{2}} \\* 
& \hspace{28mm} + \Sigma_{0}^{\frac{1}{2}} (N_{10}^{-1} - \Pi_{0}) \Sigma_{0}^{\frac{1}{2}}\bigg) \Sigma_{0}^{-\frac{1}{2}} N_{10} \\
& \hspace{3mm} = N_{10} \Sigma_{0}^{-\frac{1}{2}} \! \left(\left[\Sigma_{0}^{\frac{1}{2}} (N_{10}^{-1} - \Pi_{0}) 
\Sigma_{0}^{\frac{1}{2}} + \frac{I}{2}\right]^{2} - \frac{I}{4}\right) \! \Sigma_{0}^{-\frac{1}{2}} N_{10}.
\end{align*}
Therefore, given $\Sigma(1) = \Sigma_{1} \succ 0$, $\Pi(0)$ is unique and is given by
\begin{align*}
& \Pi(0) = \frac{\Sigma_{0}^{-1}}{2} + N_{10}^{-1} \\* 
& - \Sigma_{0}^{-\frac{1}{2}} \! \left(\frac{I}{4} + \Sigma_{0}^{\frac{1}{2}} N_{10}^{-1} \Phi_{A}(0, 1) \Sigma_{1} \Phi_{A}(0, 1)\t N_{10}^{-1} \Sigma_{0}^{\frac{1}{2}}\right)^{\frac{1}{2}} \! \Sigma_{0}^{-\frac{1}{2}}.
\end{align*}
This is the same solution reported in \cite{chen2016I}.

\subsection{Map Between Boundary Values of the Coupled ODEs}

Since $\Phi_{A-BB\t\Pi}(1, s) = \Phi_{A-BB\t\Pi}(1, 0) \Phi_{A-BB\t\Pi}(0, s)$, Lemma \ref{lem:sigma1} provides an explicit map from $\Pi(0)$ to $\Sigma(1)$. 
Specifically, define $f: \big\{\Pi_{0} \in \R^{n \times n} \,|\, \Pi_{0} = \Pi_{0}\t \prec N_{10}^{-1}\big\} \to \big\{\Sigma_{1} \in \R^{n \times n} \,|\, \Sigma_{1} = \Sigma_{1}\t \succ 0\big\}$ such that
\begin{align} \label{map:pi0-sigma1}
&f(\Pi_{0}) =  \Phi_{A_{10}} (I - N_{10}\Pi_{0}) \bigg[ \Sigma_{0} ~ + \nonumber \\* 
&\int_{0}^{1} (I - N_{s0}\Pi_{0})^{-1} \Phi_{A_{0s}} C_{s} D_{s} C_{s}\t \Phi_{A_{0s}}\t (I - \Pi_{0}N_{s0})^{-1} \dd s \bigg] \nonumber \\* 
&\times (I - \Pi_{0}N_{10}) \Phi_{A_{10}}\t,
\end{align}
where $\Phi_{A_{ts}} \triangleq \Phi_{A}(t, s)$, $N_{ts} \triangleq N(t, s)$, $C_{s} \triangleq C(s)$, and $D_{s} \triangleq D(s)$. 
Lemma \ref{lem:sigma1} leads naturally to an alternative sufficient condition for optimality stated below.

\begin{corollary} \label{cor:suff-sigma1}
Assume there exists an $n \times n$ symmetric matrix $\Pi_{0} \prec N_{10}^{-1}$ such that the terminal state covariance $\Sigma_{1}$ can be written as $\Sigma_{1} = f(\Pi_{0})$. 
Then, the optimal control is 
\begin{equation*}
u^{*}(t) = - B\t(t) \Pi(t) x(t),
\end{equation*}
where $\Pi(t)$ is the unique solution to \eqref{ode:pi} with $\Pi(0) = \Pi_{0}$.
\end{corollary}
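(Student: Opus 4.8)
The plan is to assemble the pieces already established in this section and invoke the sufficient-optimality Lemma \ref{lem:suff}. The hypothesis supplies a symmetric $\Pi_0 \prec N_{10}^{-1}$, and since the convention $N(0, 0^{+})^{-1} = -\infty$ renders the lower bound in \eqref{cond:pi-exist} vacuous at $s = 0$, the two-sided inequality $N(0, 0^{+})^{-1} \prec \Pi_0 \prec N_{10}^{-1}$ holds. Thus condition \eqref{cond:pi-exist} is satisfied with $s = 0$, and Lemma \ref{lem:pi-exist-sol} guarantees that the Riccati equation \eqref{ode:pi} with $\Pi(0) = \Pi_0$ admits a unique solution $\Pi(t)$ on the whole interval $[0, 1]$, obeying the bounds \eqref{ineq:pi-bd}. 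In particular, the feedback gain $-B\t(t)\Pi(t)$ is well defined for every $t \in [0, 1]$.

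Next I would construct the companion $\Sigma(t)$. Setting $\Sigma(0) = \Sigma_0$ and defining $\Sigma(t)$ by the explicit formula \eqref{sol:sigma} of Lemma \ref{lem:sigma1}, this $\Sigma(t)$ solves the Lyapunov equation \eqref{ode:sigma} by construction. It then remains only to verify the terminal boundary condition $\Sigma(1) = \Sigma_1$. Evaluating \eqref{sol:sigma} at $t = 1$, substituting the closed form \eqref{eqn:phi-AB-pi} for $\Phi_{A-BB\t\Pi}$, and factoring out the leading and trailing blocks $\Phi_{A-BB\t\Pi}(1, 0) = \Phi_{A_{10}}(I - N_{10}\Pi_0)$ via the composition property $\Phi_{A-BB\t\Pi}(1, s) = \Phi_{A-BB\t\Pi}(1, 0)\,\Phi_{A-BB\t\Pi}(0, s)$, one recovers precisely the expression \eqref{map:pi0-sigma1}. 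Hence $\Sigma(1) = f(\Pi_0)$, and the standing assumption $\Sigma_1 = f(\Pi_0)$ delivers $\Sigma(1) = \Sigma_1$.

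With both matrix functions in hand, the pair $(\Pi, \Sigma)$ satisfies the Riccati equation \eqref{ode:pi}, the Lyapunov equation \eqref{ode:sigma}, and the split boundary conditions \eqref{bdr:sigma} with $\Sigma_0, \Sigma_1 \succ 0$. Invoking Lemma \ref{lem:suff} under the running normalization $R(t) \equiv I_p$ then yields that $u^{*}(t) = -B\t(t)\Pi(t)x(t)$ is optimal for the cost \eqref{cost} subject to \eqref{bgn:mean-cov}, \eqref{tgt:mean-cov}, which is the claim. The only step demanding genuine computation rather than a direct citation is confirming that the terminal value of \eqref{sol:sigma} coincides with the map $f$ of \eqref{map:pi0-sigma1}; this is a routine regrouping of transition-matrix factors already performed in the derivation immediately preceding the corollary, so I anticipate no real obstacle — the corollary is essentially a bookkeeping consequence of Lemmas \ref{lem:pi-exist-sol}, \ref{lem:sigma1}, and \ref{lem:suff}.
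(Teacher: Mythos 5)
Your proposal is correct and follows exactly the route the paper intends: the corollary is stated as an immediate consequence of Lemma \ref{lem:pi-exist-sol} (condition \eqref{cond:pi-exist} at $s=0$ reducing to $\Pi_0 \prec N_{10}^{-1}$), Lemma \ref{lem:sigma1} (whose terminal value is by definition the map $f$ of \eqref{map:pi0-sigma1}), and the sufficiency Lemma \ref{lem:suff}. The paper gives no separate proof, treating it as the same bookkeeping you describe.
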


\subsection{Existence and Uniqueness of the Optimal Control}

We are ready to show that the map \eqref{map:pi0-sigma1} defines a one-to-one correspondence between $\Pi(0)$ and $\Sigma(1)$, which leads to the existence and uniqueness of the optimal control. 
As usual, $\big(A(t), B(t)\big)$ is assumed to be totally controllable on $[0, 1]$.

For the sake of notational simplicity, let
\begin{equation*}
\Phi_{\Pi_{10}} \triangleq \Phi_{A-BB\t\Pi}(1, 0) = \Phi_{A_{10}} (I - N_{10}\Pi_{0}),
\end{equation*}
and
\begin{align*}
T_{s0} \triangleq 
\begin{cases}
(N_{s0}^{-1} - \Pi_{0})^{-1} \succ 0, & s \in (0, 1], \\* 
0_{n \times n}, & s = 0.
\end{cases}
\end{align*}


The vectorization $\vect(H)$ of an $n \times n$ matrix $H = [h_{ij}]$ is 
\begin{equation*}
\vect(H) \triangleq [h_{11} \enspace \dots \enspace h_{n1} \enspace h_{12} \enspace \dots \enspace h_{n2} \enspace \dots \enspace h_{1n} \enspace \dots \enspace h_{nn}]\t. 
\end{equation*}
Define the map $\bar{f}: \big\{\vect(\Pi_{0}) \in \R^{n^{2}} \,\big|\, \Pi_{0} = \Pi_{0}\t \prec N_{10}^{-1}\big\} \to \big\{\vect(\Sigma_{1}) \in \R^{n^{2}} \,\big|\, \Sigma_{1} = \Sigma_{1}\t \succ 0\big\}$ such that 
\begin{equation} \label{map:pi0-sigma1-vec}
\bar{f}\big(\vect(\Pi_{0})\big) = \vect\big(f(\Pi_{0})\big),
\end{equation}
where $f$ is defined by \eqref{map:pi0-sigma1}.


\begin{lemma} \label{lem:map-jacob}
The Jacobian of the map $\bar{f}$, defined by \eqref{map:pi0-sigma1-vec}, at $\vect(\Pi_{0})$ is given by
\begin{multline} \label{eqn:map-jacob}
\partial {\bar{f}} \big(\vect(\Pi_{0})\big) = - \Phi_{\Pi_{10}} \otimes \Phi_{\Pi_{10}} \bigg[ \Sigma_{0} \otimes T_{10} + T_{10} \otimes \Sigma_{0} \\* 
+ \int_{0}^{1} P_{s} \otimes \big(T_{10} - T_{s0}\big) + \big(T_{10} - T_{s0}\big) \otimes P_{s} \, \dd s \bigg],
\end{multline}
where $\otimes$ denotes the Kronecker product and 
\begin{equation*}
P_{s} \triangleq (I - N_{s0}\Pi_{0})^{-1} \Phi_{A_{0s}} C_{s} D_{s} C_{s}\t \Phi_{A_{0s}}\t (I - \Pi_{0}N_{s0})^{-1} \succeq 0.
\end{equation*}
\end{lemma}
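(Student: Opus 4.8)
The plan is to compute the Fréchet differential $\dd f$ of the matrix-valued map $f$ defined by \eqref{map:pi0-sigma1} at $\Pi_{0}$ along a symmetric perturbation $H$, reduce it to a sum of ``sandwich'' terms of the form $X H Y$, and then vectorize using the standard identity $\vect(X H Y) = (Y\t \otimes X)\vect(H)$ to read off the Jacobian. First I would write $f$ in the factored form $f(\Pi_{0}) = \Phi_{\Pi_{10}}\, G(\Pi_{0})\, \Phi_{\Pi_{10}}\t$, where $\Phi_{\Pi_{10}} = \Phi_{A_{10}}(I - N_{10}\Pi_{0})$ and $G(\Pi_{0}) = \Sigma_{0} + \int_{0}^{1} P_{s}\,\dd s$, noting that $P_{s} = M_{s} W_{s} M_{s}\t$ with $M_{s} \triangleq (I - N_{s0}\Pi_{0})^{-1}$ and $W_{s} \triangleq \Phi_{A_{0s}} C_{s} D_{s} C_{s}\t \Phi_{A_{0s}}\t$; here the symmetry of $\Pi_{0}$ and $N_{s0}$ gives $(I - \Pi_{0} N_{s0})^{-1} = M_{s}\t$.

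The key algebraic ingredient is two identities that collapse the derivatives into the clean form involving the matrices $T_{s0}$. Factoring $I - N_{s0}\Pi_{0} = N_{s0}(N_{s0}^{-1} - \Pi_{0})$ gives $M_{s} N_{s0} = (N_{s0}^{-1} - \Pi_{0})^{-1} = T_{s0}$ and, by symmetry, $N_{s0} M_{s}\t = T_{s0}$; the same manipulation at $s = 1$ yields $\Phi_{A_{10}} N_{10} = \Phi_{\Pi_{10}} T_{10}$. Applying $\dd(X^{-1}) = -X^{-1}(\dd X)X^{-1}$ with $X = I - N_{s0}\Pi_{0}$ and $\dd X = -N_{s0} H$ gives $\dd M_{s} = M_{s} N_{s0} H M_{s} = T_{s0} H M_{s}$, so that $\dd P_{s} = (\dd M_{s}) W_{s} M_{s}\t + M_{s} W_{s} (\dd M_{s})\t$ reduces, after using both identities and the symmetry of $H$ and $T_{s0}$, to the symmetric form $\dd P_{s} = T_{s0} H P_{s} + P_{s} H T_{s0}$. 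Likewise $\dd \Phi_{\Pi_{10}} = -\Phi_{A_{10}} N_{10} H = -\Phi_{\Pi_{10}} T_{10} H$.

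I would then apply the product rule to $f = \Phi_{\Pi_{10}} G \Phi_{\Pi_{10}}\t$, substitute $\dd\Phi_{\Pi_{10}}$ and $\dd G = \int_{0}^{1} \dd P_{s}\,\dd s$, and factor $\Phi_{\Pi_{10}}(\,\cdot\,)\Phi_{\Pi_{10}}\t$ out of all three resulting terms. Splitting $G = \Sigma_{0} + \int_{0}^{1} P_{s}\,\dd s$ inside the two boundary terms and combining with the integral term, the integrands pair up so that the $\Sigma_{0}$ part becomes $T_{10} H \Sigma_{0} + \Sigma_{0} H T_{10}$ and the integrand becomes $(T_{10} - T_{s0}) H P_{s} + P_{s} H (T_{10} - T_{s0})$, all under an overall factor $-\Phi_{\Pi_{10}}(\,\cdot\,)\Phi_{\Pi_{10}}\t$. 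Vectorizing via $\vect(\Phi_{\Pi_{10}} Y \Phi_{\Pi_{10}}\t) = (\Phi_{\Pi_{10}} \otimes \Phi_{\Pi_{10}})\vect(Y)$ and $\vect(X H Y) = (Y\t \otimes X)\vect(H)$, then dropping transposes on the symmetric matrices $\Sigma_{0}$, $T_{10}$, $T_{s0}$, $P_{s}$, yields exactly \eqref{eqn:map-jacob}.

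I expect the main obstacle to be the differentiation of the integrand $P_{s}$ and, in particular, recognizing that the two inverse factors $M_{s}$ and $M_{s}\t$ differentiate symmetrically and collapse, via $M_{s} N_{s0} = T_{s0}$, into precisely $T_{s0} H P_{s} + P_{s} H T_{s0}$; this identity is what drives the whole simplification. The remaining technical points—justifying differentiation under the integral sign from uniform bounds on $P_{s}$ and its derivative over $[0,1]$ (using continuity of $\Phi_{A}$, $N_{s0}$, and $C_{s} D_{s} C_{s}\t$ together with $\Pi_{0} \prec N_{10}^{-1}$), and the continuity of $T_{s0}$ at $s = 0$ with $T_{00} = 0$—are routine given total controllability, and the surrounding algebra is essentially mechanical once the two identities are in place.
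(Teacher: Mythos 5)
Your proposal is correct and follows essentially the same route as the paper: a first-order expansion of the inverses $(I - N_{s0}\Pi_{0})^{-1}$ (your $\dd(X^{-1}) = -X^{-1}(\dd X)X^{-1}$ is the paper's perturbation identity), the collapsing relations $M_{s}N_{s0} = T_{s0}$ and $\Phi_{A_{10}}N_{10} = \Phi_{\Pi_{10}}T_{10}$ to reach the sandwich form $-\Phi_{\Pi_{10}}[\,T_{10}H\Sigma_{0} + \Sigma_{0}HT_{10} + \int_{0}^{1}(T_{10}-T_{s0})HP_{s} + P_{s}H(T_{10}-T_{s0})\,\dd s\,]\Phi_{\Pi_{10}}\t$, and then vectorization via $\vect(XHY) = (Y\t\otimes X)\vect(H)$. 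The only difference is presentational (Fréchet differentials versus explicit collection of first-order terms in $\Delta\Pi_{0}$), so no further comment is needed.
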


\begin{proof}
See the Appendix. \hfill
\end{proof}


\begin{lemma} \label{lem:map-inv}
For any given $\Sigma_{0} \succ 0$, the map $f$ defined by \eqref{map:pi0-sigma1} is a homeomorphism. 
Thus, for any $\Sigma_{1} \succ 0$, there exists a unique $\Pi_{0} \prec N_{10}^{-1}$ such that $\Sigma_{1} = f(\Pi_{0})$.
\end{lemma}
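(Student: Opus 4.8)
The plan is to prove that $f$ is a smooth bijection with continuous inverse by combining the local invertibility coming from Lemma~\ref{lem:map-jacob} with the boundary (coercivity) behavior established in Lemma~\ref{lem:sigma1}, and then invoking the global inverse function theorem of Hadamard--Caccioppoli. Throughout, I identify the domain $\mathcal{D} = \{\Pi_{0} = \Pi_{0}\t \prec N_{10}^{-1}\}$ and the range $\mathcal{R} = \{\Sigma_{1} = \Sigma_{1}\t \succ 0\}$ as open, convex subsets of the $\tfrac{1}{2}n(n+1)$-dimensional space of real symmetric $n \times n$ matrices; in particular both are connected, and $\mathcal{R}$, being convex, is simply connected. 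The explicit formula \eqref{map:pi0-sigma1} shows that $f$ is smooth (hence continuous) on $\mathcal{D}$, because the factors $(I - N_{s0}\Pi_{0})^{-1}$ exist for every $s \in [0,1]$ precisely on $\mathcal{D}$, which is the content of condition \eqref{cond:pi-exist}.

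First I would show that $f$ is a local diffeomorphism by proving that its Jacobian \eqref{eqn:map-jacob} is nonsingular on the symmetric subspace at every $\Pi_{0} \in \mathcal{D}$. Writing the bracketed operator in \eqref{eqn:map-jacob} in matrix form via $(A \otimes B)\vect(X) = \vect(BXA\t)$, it acts on a symmetric matrix $X$ as $L(X) = T_{10}X\Sigma_{0} + \Sigma_{0}XT_{10} + \int_{0}^{1}\big[(T_{10}-T_{s0})XP_{s} + P_{s}X(T_{10}-T_{s0})\big]\,\dd s$, which is again symmetric. Using the Gramian monotonicity $N_{s0} \preceq N_{10}$ one gets $T_{s0} \preceq T_{10}$, hence $T_{10}-T_{s0} \succeq 0$, while $T_{10} \succ 0$, $\Sigma_{0} \succ 0$, and $P_{s} \succeq 0$. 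A short computation then gives $\trace\big(X L(X)\big) = 2\|\Sigma_{0}^{1/2}XT_{10}^{1/2}\|_{F}^{2} + 2\int_{0}^{1}\|(T_{10}-T_{s0})^{1/2}XP_{s}^{1/2}\|_{F}^{2}\,\dd s \ge 2\|\Sigma_{0}^{1/2}XT_{10}^{1/2}\|_{F}^{2}$, which is strictly positive for $X \neq 0$ since $\Sigma_{0}^{1/2}$ and $T_{10}^{1/2}$ are invertible. Thus $L$ is positive definite on the symmetric subspace, hence nonsingular there; since $\Phi_{\Pi_{10}}$ is invertible, the congruence $X \mapsto \Phi_{\Pi_{10}} X \Phi_{\Pi_{10}}\t$ is nonsingular as well, and therefore so is the Jacobian \eqref{eqn:map-jacob}. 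By the inverse function theorem, $f$ is a local diffeomorphism, and in particular an open map.

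Next I would establish that $f$ is \emph{proper}, i.e., that $f^{-1}(K)$ is compact for every compact $K \subset \mathcal{R}$; equivalently, that $f(\Pi_{0})$ tends to $\partial\mathcal{R}$ whenever $\Pi_{0}$ leaves every compact subset of $\mathcal{D}$. A sequence leaves the compact subsets of $\mathcal{D}$ only if either $\lambda_{\min}(N_{10}^{-1} - \Pi_{0}) \to 0$ (approach to the face $\Pi_{0} = N_{10}^{-1}$) or $\lambda_{\min}(\Pi_{0}) \to -\infty$. Lemma~\ref{lem:sigma1} already records the limiting behavior for the uniform approaches $\Pi_{0} \to N_{10}^{-1}$ (giving $\Sigma_{1} \to 0$) and $\Pi_{0} \to -\infty$ (giving $\Sigma_{1} \to +\infty$); I would upgrade these to the partial-degeneration case by exploiting the Loewner monotonicity of the map $\Pi_{0} \mapsto f(\Pi_{0})$ inherited from the monotonicity of the Riccati flow (Lemma~\ref{lem:pi-exist-sol}), sandwiching a general approach between uniform ones so that $\lambda_{\min}(f(\Pi_{0})) \to 0$ or $\lambda_{\max}(f(\Pi_{0})) \to +\infty$ accordingly. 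Properness in turn implies that $f(\mathcal{D})$ is closed in $\mathcal{R}$.

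Finally, I would conclude. Being an open, proper, local homeomorphism, $f$ is a covering map onto $\mathcal{R}$; since $\mathcal{R}$ is simply connected and $\mathcal{D}$ is connected, this covering has a single sheet, so $f$ is a homeomorphism onto $\mathcal{R}$. In particular, for every $\Sigma_{1} \succ 0$ there is a unique $\Pi_{0} \prec N_{10}^{-1}$ with $\Sigma_{1} = f(\Pi_{0})$, and the inverse $\Sigma_{1} \mapsto \Pi_{0}$ is continuous. The main obstacle is the properness step: Lemma~\ref{lem:sigma1} supplies only the uniform limits, and the delicate part is showing that \emph{every} boundary approach---including partial eigenvalue degeneration, where only some eigenvalues of $N_{10}^{-1}-\Pi_{0}$ collapse or some eigenvalues of $\Pi_{0}$ diverge---forces $f(\Pi_{0})$ out to $\partial\mathcal{R}$; the monotonicity of the flow is the tool I would rely on to reduce this to the uniform case.
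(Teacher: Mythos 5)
Your architecture---nonsingularity of the Jacobian, properness, then Hadamard's global inverse function theorem (which you re-derive via the covering-map argument)---is exactly the paper's, and your Jacobian step is essentially the paper's computation: the paper shows $\vect(X)\t S \vect(X) \geq \trace\big(T_{10}^{\frac{1}{2}} X \Sigma_{0} X\t T_{10}^{\frac{1}{2}} + T_{10}^{\frac{1}{2}} X\t \Sigma_{0} X T_{10}^{\frac{1}{2}}\big) > 0$ for every nonzero $X$ (not only symmetric $X$), which is your positivity of $L$. The genuine gap is in the properness step, which you correctly identify as the crux but then propose to close with a tool that is not available: you want to reduce partial eigenvalue degeneration to the uniform limits of Lemma~\ref{lem:sigma1} ``by exploiting the Loewner monotonicity of the map $\Pi_{0} \mapsto f(\Pi_{0})$.'' That map is \emph{not} Loewner-monotone when $n > 1$---the paper states this explicitly in the discussion following Corollary~\ref{cor:suff-sigma1} (only the scalar case is monotone). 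The monotonicity of the Riccati flow $\Pi_{0} \mapsto \Pi(t)$ from Lemma~\ref{lem:pi-exist-sol} does not transfer through \eqref{sol:sigma} to $\Sigma(1)$, so the sandwiching you describe does not go through as stated, and your proof is incomplete precisely at the step you flagged as delicate.

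The paper's properness argument is much simpler and sidesteps the boundary case analysis entirely. Since every $P_{s} \succeq 0$ in \eqref{map:pi0-sigma1}, congruence preserves the Loewner order and gives
\begin{equation*}
f(\Pi_{0}) \;\succeq\; \Phi_{A_{10}} (I - N_{10}\Pi_{0})\, \Sigma_{0}\, (I - \Pi_{0}N_{10}) \Phi_{A_{10}}\t \;\succeq\; 0 .
\end{equation*}
Hence if $f(\Pi_{0})$ ranges over a bounded set $\mathcal{K}$, so does the right-hand side; because $\Sigma_{0} \succ 0$ and $\Phi_{A_{10}}$, $N_{10}$ are invertible, this forces $I - N_{10}\Pi_{0}$, and therefore $\Pi_{0}$ itself, to remain bounded---no case distinction between $\lambda_{\min}(\Pi_{0}) \to -\infty$ and $\lambda_{\min}(N_{10}^{-1} - \Pi_{0}) \to 0$ is needed, and no monotonicity of $f$ is invoked. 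Combined with continuity of $f$ this yields compactness of $f^{-1}(\mathcal{K})$. (Your worry about sequences accumulating on the face where $N_{10}^{-1} - \Pi_{0}$ becomes singular is a fair one---the paper's two-line argument also treats it lightly---but if one wants to rule it out the quantity to examine is $y\t \big(\Sigma_{0} + \int_{0}^{1} P_{s}\, \dd s\big) y$ with $y = (N_{10}^{-1} - \Pi_{0})u$ for a degenerating eigenvector $u$, not a Loewner comparison of $f$.) Replace your monotonicity-based plan with this lower-bound argument and the rest of your proof stands and coincides with the paper's.
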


\begin{proof}
See the Appendix. \hfill
\end{proof}

To this end, the main result of the paper, Theorem \ref{thm:exist-unique}, follows immediately from Lemma \ref{lem:map-inv} and Corollary \ref{cor:suff-sigma1}.


In general, it is not straightforward to get an analytic expression of the inverse map $f^{-1}$ from $\Sigma(1)$ to $\Pi(0)$. 
Instead, we have to resort to numerical methods. 
Specifically, 
since the map $f$ from $\Pi(0)$ to $\Sigma(1)$ and its Jacobian are known, 
we can use root-finding algorithms to find $\Pi(0)$ for given $\Sigma(1)$. 
When $C(t) D(t) C\t(t)$ is approximately equal to a scalar multiple of $B(t) R^{-1}(t) B\t(t)$, 
one can adopt perturbation methods to find an approximate solution to the coupled ODEs. 
Lastly, as explained in \cite{chen2016II}, the general optimal covariance steering problem can always be solved numerically by recasting it as a semidefinite program.


When $n > 1$, the map $f$ from $\Pi(0)$ to $\Sigma(1)$ is not monotone in the Loewner order\footnote{Loewner order is a partial order on the set of $n \times n$ symmetric matrices in that, for two such matrices $M_{1}$ and $M_{2}$, $M_{1} \geq M_{2}$ if $M_{1} \succeq M_{2}$, and $M_{1} > M_{2}$ if $M_{1} \succ M_{2}$.}. 
However, when $n = 1$, $f$ is monotonically decreasing. 
Thus, we can find $\Pi(0)$ using, for example, bisection. 
In addition, for $n=1$ we can also handle the pathological case of a singular initial covariance.

\subsection{One-Dimensional Covariance Control with Singular Initial Covariance}

In this subsection, we explain how to optimally steer the state covariance when the initial covariance is zero, that is, the initial state is deterministic, by exploiting the monotonicity of $f$ when $n = 1$. 
Without loss of generality, assume $p = q = 1$. 
Let the one-dimensional variables be denoted by the lower-case letters of the corresponding matrix variables. 
With $n = 1$, the property of total controllability reduces to the property that, for all $0 \leq t_{0} < t_{1} \leq 1$, there exists $t \in (t_{0}, t_{1})$ such that $b(t) \neq 0$, which is assumed in this subsection.

When $n=1$, let $\sigma(t; \pi_{0})$ denote the state covariance \eqref{sol:sigma} starting from $\pi(0) = \pi_{0} < N(1, 0)^{-1}$ and $\sigma(0) = \sigma_{0} \geq 0$. 
It follows from the monotonicity of the Riccati differential equation \cite{freiling1996generalized} that $\sigma(1; \pi_{0})$ is continuous and monotonically decreasing in $\pi_{0}$. 
The following result shows that with a singular initial state covariance $\sigma(0) = 0$, the desired terminal covariance may not be too large, depending on the initial noise.

\begin{theorem} \label{thm:exist-unique-1d}
Let $\sigma_{0} = 0$. 
For any given $\sigma_{1} \in (0, \eta)$, where 
\begin{align*}
&\eta \triangleq \\* 
&\hspace{-0.2mm} \int_{0}^{1} \hspace{-1.5mm} \left( \hspace{-0.5mm} \frac{\int_{0}^{1} e^{\int_{\tau}^{1} 2a(s) \, \dd s} r^{-1}(\tau) b^{2}(\tau) \, \dd\tau}{\int_{0}^{t} e^{\int_{\tau}^{t} 2a(s) \, \dd s} r^{-1}(\tau) b^{2}(\tau) \, \dd\tau} \hspace{-0.5mm} \right)^{2} \hspace{-2mm} e^{- \int_{t}^{1} 2a(\tau) \, \dd\tau} d(t) c^{2}(t) \, \dd t,
\end{align*}
the unique optimal control for solving the covariance steering problem is given by \eqref{ctrl:opt}. 
As $\pi(0) \to -\infty$, $\sigma(1) \to \eta^{-}$. 
Moreover\footnote{If $c(0)d(0) = 0$, the value of $\eta$ depends on the asymptotic behavior of $d(t) c^{2}(t)$ as $t \to 0^{+}$.}, if $c(0)d(0) \neq 0$, $\eta = +\infty$. 
\end{theorem}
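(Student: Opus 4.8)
The plan is to prove the theorem by establishing that the scalar map $\pi_0 \mapsto \sigma(1;\pi_0)$ is a continuous strictly decreasing bijection from $\big(-\infty,\,N(1,0)^{-1}\big)$ onto the open interval $(0,\eta)$. Continuity and strict monotonicity are already in hand from the monotonicity of the Riccati flow \cite{freiling1996generalized} noted before the statement, so the range of this map is exactly the open interval bounded by its two one-sided endpoint limits. Hence every $\sigma_1\in(0,\eta)$ has a unique preimage $\pi_0\prec N(1,0)^{-1}$, and Corollary~\ref{cor:suff-sigma1} then delivers the unique optimal control \eqref{ctrl:opt}. The whole proof therefore reduces to computing the two endpoint limits of $\sigma(1;\pi_0)$.

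The endpoint $\pi_0\to N(1,0)^{-1}$ is immediate from Lemma~\ref{lem:sigma1}, which gives $\sigma(1;\pi_0)\to 0^{+}$. The substantive step is the opposite endpoint $\pi_0\to-\infty$. With $\sigma_0=0$, the first term of \eqref{sol:sigma} drops out, leaving $\sigma(1;\pi_0)=\int_0^1 \Phi_{a-r^{-1}b^2\pi}(1,s)^2\,c^2(s)d(s)\,\dd s$, and by \eqref{eqn:phi-AB-pi} the scalar transition factor equals $\Phi_a(1,s)\big(1-N(1,s)\pi(s)\big)$, which is strictly positive since $\pi(s)\prec N(1,s)^{-1}$ by \eqref{ineq:pi-bd}. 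From the explicit formula \eqref{sol:pi}, $\pi(s)$ is increasing in $\pi_0$ with $\pi(s)\to N(0,s)^{-1}$ as $\pi_0\to-\infty$, so the integrand increases monotonically as $\pi_0$ decreases; the monotone convergence theorem then permits passing the limit inside the integral. Substituting $N(0,s)^{-1}=-\Phi_a(0,s)^2 N(s,0)^{-1}$ and using the identity $\Phi_a(0,s)^2 N(1,s)=N(1,0)-N(s,0)$ collapses $1-N(1,s)\pi(s)$ to $N(1,0)/N(s,0)$, and the limiting integral $\int_0^1\Phi_a(1,s)^2\big(N(1,0)/N(s,0)\big)^2 c^2(s)d(s)\,\dd s$ is, after rewriting $\Phi_a$ and $N$ in terms of the exponential integrals, exactly $\eta$. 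Because the convergence is monotone from below, $\sigma(1)\to\eta^{-}$, and combined with the upper endpoint this pins the range down to $(0,\eta)$.

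It remains to show that $\eta=+\infty$ when $c(0)d(0)\neq 0$. Near $t=0$ the factor $\Phi_a(1,t)^2 N(1,0)^2\,d(t)c^2(t)$ tends to $\Phi_a(1,0)^2 N(1,0)^2\,d(0)c^2(0)>0$, so the integrand of $\eta$ is bounded below by $m\,N(t,0)^{-2}$ on some interval $[0,\delta]$ with $m>0$, and it suffices to prove $\int_0^\delta N(t,0)^{-2}\,\dd t=+\infty$. Writing $N(t,0)=\int_0^t\phi(\tau)\,\dd\tau$ with $\phi(\tau)=\Phi_a(0,\tau)^2 r^{-1}(\tau)b^2(\tau)\ge 0$ continuous and hence bounded by $\phi_{\max}=\max_{[0,1]}\phi$, I use $\phi(t)/N(t,0)^2=-\tfrac{\dd}{\dd t}N(t,0)^{-1}$ and $\phi\le\phi_{\max}$ to get $N(t,0)^{-2}\ge\phi_{\max}^{-1}\big(-\tfrac{\dd}{\dd t}N(t,0)^{-1}\big)$; integrating and using $N(t,0)\to 0^{+}$ as $t\to 0^{+}$ (so $N(t,0)^{-1}\to+\infty$) shows the integral diverges.

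The main obstacle is the $\pi_0\to-\infty$ limit. Two points there demand care: justifying the interchange of limit and integral, which relies on verifying that the integrand is monotone in $\pi_0$ and that $1-N(1,s)\pi(s)$ stays positive, and carrying out the algebraic reduction that identifies the limiting integral with the stated closed form of $\eta$. The only other non-routine idea is the $-\tfrac{\dd}{\dd t}N(t,0)^{-1}$ device used together with the boundedness of the Gramian density to force divergence in the $c(0)d(0)\neq0$ case.
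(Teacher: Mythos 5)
Your proposal is correct and follows essentially the same route as the paper: monotone convergence of the integrand in \eqref{sol:sigma} as $\pi_{0}\to-\infty$ to obtain $\sigma(1)\to\eta^{-}$, combined with a lower bound on the integrand of $\eta$ near $t=0$ forcing divergence when $c(0)d(0)\neq 0$. You supply more detail than the paper on the algebraic identification of the limiting integral with the closed form of $\eta$ (which checks out), and your divergence argument via $-\tfrac{\dd}{\dd t}N(t,0)^{-1}$ and the bound $\phi\le\phi_{\max}$ is a clean, equivalent substitute for the paper's explicit exponential estimate of the Gramian denominator as $\Omega t$.
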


\begin{proof}
See the Appendix. \hfill
\end{proof}

\section{Numerical Examples} \label{sec:expl}

We illustrate the results of the theory using two examples. 
The first example is a simple problem of controlled population growth. 
Assume the population is susceptible to environmental uncertainties such as epidemics. 
Let $x \geq 0$ be the population under control. 
The effects of epidemics on the population can be modeled by the negative of a nonhomogeneous Poisson process $h(t)$ with arrival rate $\lambda(t)$~\cite{o2021comparative}. 
The controlled population subject to environmental uncertainties is modeled by the linear stochastic differential equation
\begin{equation*}
\dd x(t) = a(t) x(t) \, \dd t + u(t) \, \dd t - 4 \, \dd h(t) + 2 \, \dd w(t), \hspace{2mm} t \in [0, 1].
\end{equation*}
The equation can be written, equivalently, as
\begin{equation*}
\dd x(t) = a(t) x(t) \, \dd t - 4\lambda(t) \, \dd t + u(t) \, \dd t + \dd m(t), \hspace{2mm} t \in [0, 1],
\end{equation*}
where $\dd m(t) = - 4\big(\dd h(t) - \lambda(t) \, \dd t\big) + 2 \, \dd w(t)$. Assume that $a(t) = 0.8 - 0.1t$ and $\lambda(t) = 2 + t$. 
The goal is to control the population over the time interval $[0, 1]$ from an initial distribution with mean $\mu_{0} = 50$ and variance $\sigma_0 = 6$ to a target distribution with mean $\mu_{1} = 60$ and variance $\sigma_{1} = 2$ using the least control energy $\mathbb{E}\left[\int_{0}^{1} u^{2}(t) \, \dd t\right]$. 
The optimal control $u^{*}$ is given by
\begin{equation*}
u^{*}(t) = - \pi(t)x(t) + 4\lambda(t) + \nu(t),
\end{equation*}
where $\pi(t)$ is the unique solution to
\begin{align*}
\dot{\pi} &= - 2a(t)\pi + \pi^{2}, \\
\dot{\sigma} &= 2\big(a(t) - \pi(t)\big)\sigma + 16\lambda(t) + 4, \enspace \sigma(0) = 6, \enspace \sigma(1) = 2,
\end{align*}
and 
\begin{align*}
&\nu(t) = \phi(1, t)\left(\int_{0}^{1} \phi(1, \tau)^{2} \, \dd\tau\right)^{-1} \hspace{-1mm} \big(\mu_{1} - \phi(1, 0)\mu_{0} \big), \\
&\dot{\phi}(1, t) = \big(\pi(t) - a(t)\big) \phi(1, t), \quad \phi(1, 1) = 1.
\end{align*}
Assuming $x(0) \sim \mathcal{N}(\mu_{0}, \sigma_{0})$, ten sample paths of $x(t)$ using the optimal control $u^{*}$ are plotted in Figure \ref{fig:expl-1}.
The transparent blue region is the three-standard deviation interval between $\mathbb{E}\left[x(t)\right] - 3\sqrt{\sigma}(t)$ and $\mathbb{E}\left[x(t)\right] + 3\sqrt{\sigma}(t)$, $t \in [0, 1]$.
\begin{figure}[!h]
    \centering
    \includegraphics[width=0.48\textwidth]{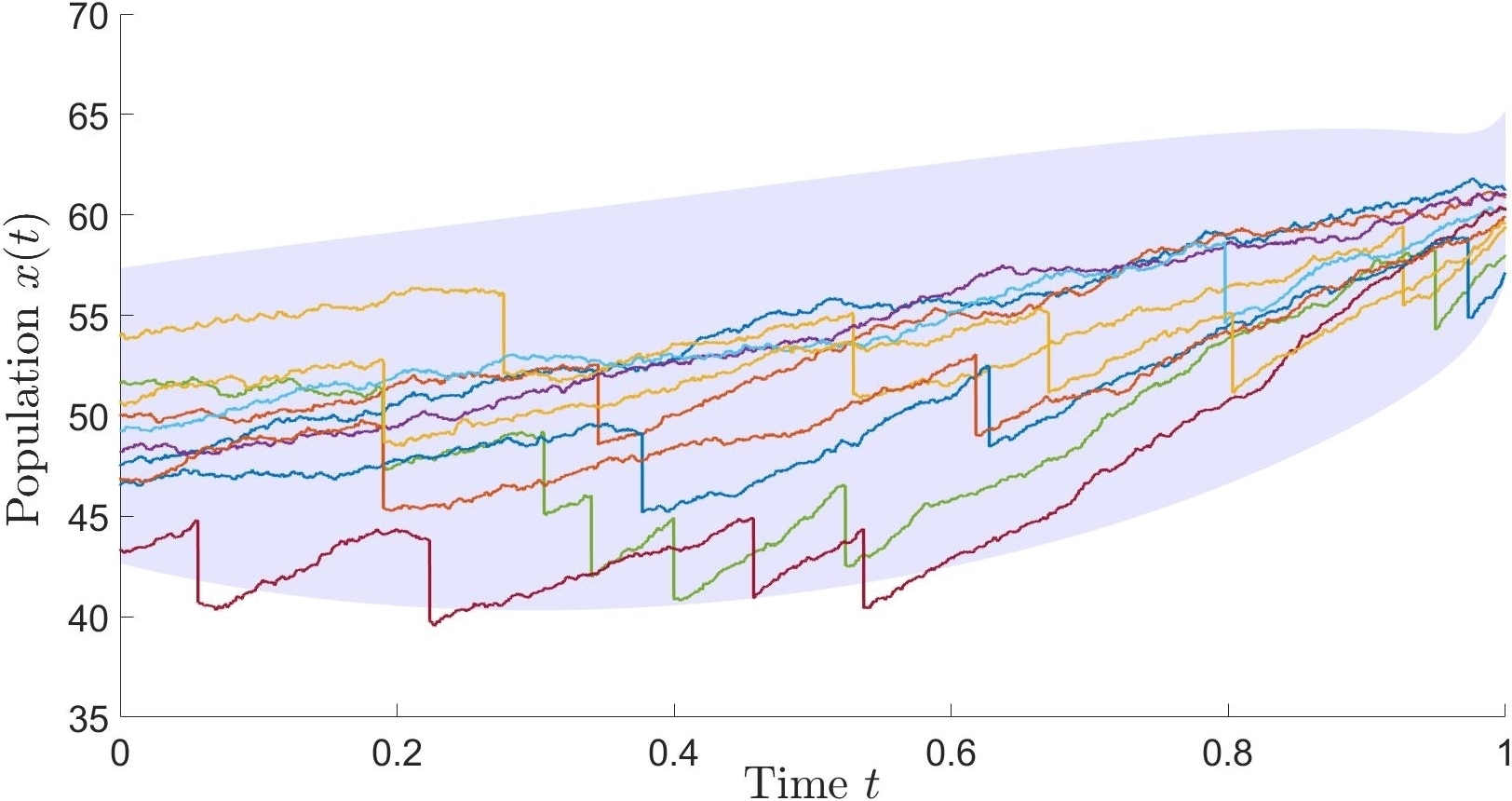}
    \caption{Controlled population: $10$ sample paths.}
    \label{fig:expl-1}
\end{figure}


In the second example, we consider a small flying vehicle in heavy rain, which is modeled by a kinematic point subject to Poisson and Gaussian noise~\cite{cowpertwait2007point}. 
We only consider the motion of the vehicle along the vertical direction. 
Let $x(t)$ and $v(t)$ denote the vertical position and velocity, respectively. 
The effect of rain on the vehicle is modeled by a nonhomogeneous compound Poisson process $h(t)$. 
The vertical dynamics of the vehicle is therefore given by
\begin{align*}
\dd x(t) &= v(t) \, \dd t, \\
\dd v(t) &= u(t) \, \dd t + 0.2 \, \dd w(t) + \dd h(t),
\end{align*}
where $h(t)$ has arrival rate $\lambda(t) = 5 - t$ and i.i.d. jump size $\chi \sim \mathcal{N}(-0.5, 0.1^{2})$. 
Our goal is to control the covariance of the vehicle while hovering at a certain height from an initial covariance of $\Sigma_{0} = \diag [0.6, 0.6]$ to a target covariance of $\Sigma_{1} = \diag [0.2, 0.1]$ with the least effort $\mathbb{E}\left[\int_{0}^{1} u^{2}(t) \, \dd t\right]$. 
Assume $\left[x(0) \enspace v(0)\right]\t \sim \mathcal{N}\big(\left[0 \enspace 0\right]\t, \Sigma_{0}\big)$. 
Figure \ref{fig:expl-2} illustrates ten controlled sample paths in the phase space as a function of time. 
The three-standard deviation tolerance interval for $t \in [0, 1]$ is depicted as the transparent tube.
\begin{figure}[!h]
    \centering
    \includegraphics[width=0.48\textwidth]{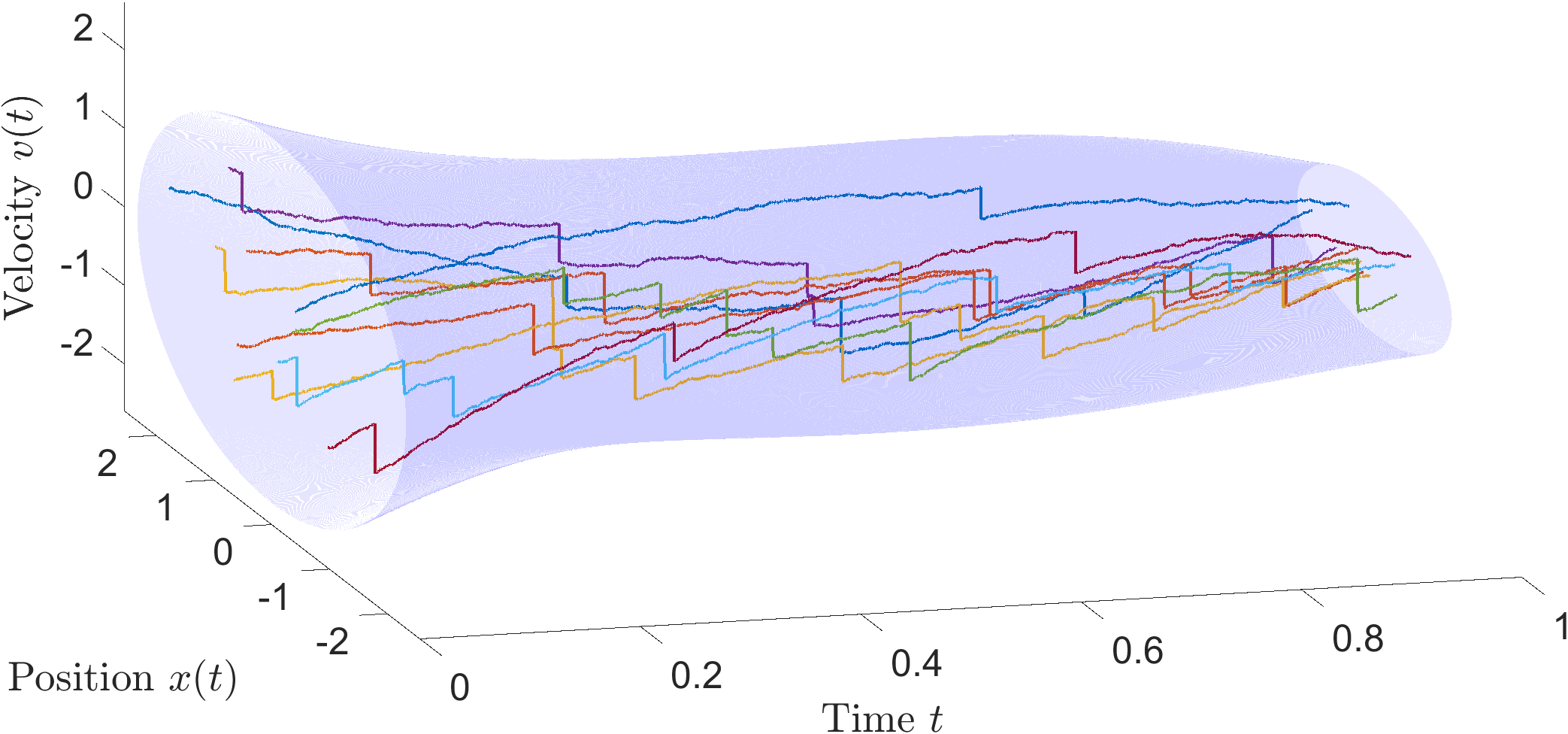}
    \caption{Controlled position and velocity: $10$ sample paths.}
    \label{fig:expl-2}
\end{figure}

\section{Concluding Remarks}

In this paper, the optimal control law is established for steering the state covariance of a general linear time-varying stochastic system subject to additive noise. 
%
Interesting potential extensions include the development of efficient algorithms for solving coupled matrix ODEs \eqref{ode:pi}, \eqref{ode:sigma}, \eqref{bdr:sigma} and the optimal control of the state covariance along the trajectory, or with partially fixed terminal covariance and/or free final time.



\bibliographystyle{IEEEtran}
\bibliography{TAC-conv-contr-add-2column-v3}


\section*{Appendix}

\begin{proof}[Proof of Lemma~\ref{lem:map-jacob}]
We can compute the Jacobian of the map $\bar{f}$ defined by \eqref{map:pi0-sigma1-vec} as follows. 
Let $\Delta {\Pi}_{0}$ denote a small increment of $\Pi_{0}$. Then, it follows from~\cite{henderson1981deriving} that
\begin{align*}
&(I - N_{s0} \Pi_{0} - N_{s0} \Delta {\Pi}_{0})^{-1} = (I - N_{s0}\Pi_{0})^{-1} \\* 
&\hspace{1mm} + (I - N_{s0} \Pi_{0})^{-1} N_{s0} \Delta {\Pi}_{0} 
(I - N_{s0} \Pi_{0})^{-1} \hspace{-1mm} + O\big(\|\Delta {\Pi}_{0}\|^{2}\big).
\end{align*}
Hence, by collecting all the first order terms of $\Delta {\Pi}_{0}$, we have
\begin{align*}
&f\big(\Pi_{0} + \Delta {\Pi}_{0}\big) - f\big(\Pi_{0}\big) = O\big(\|\Delta {\Pi}_{0}\|^{2}\big) \\* 
&\hspace{4mm} - \Phi_{A_{10}} N_{10} \Delta {\Pi}_{0} \bigg[\Sigma_{0} + \int_{0}^{1} P_{s} \, \dd s\bigg] (I - \Pi_{0}N_{10}) \Phi_{A_{10}}\t \\* 
&\hspace{4mm} - \Phi_{A_{10}} (I - N_{10}\Pi_{0}) \bigg[\Sigma_{0} + \int_{0}^{1} P_{s} \, \dd s\bigg] \Delta {\Pi}_{0} N_{10} \Phi_{A_{10}}\t \\* 
&\hspace{4mm} + \Phi_{A_{10}} (I - N_{10} \Pi_{0}) \bigg[\int_{0}^{1} 
(I - N_{s0} \Pi_{0})^{-1} N_{s0} \Delta {\Pi}_{0} P_{s} \\* 
&\hspace{9mm} + P_{s} \Delta {\Pi}_{0} N_{s0} 
(I - \Pi_{0} N_{s0})^{-1} \dd s \bigg] (I - \Pi_{0}N_{10}) \Phi_{A_{10}}\t.
\end{align*}
Thus, we can write
\begin{align*} \label{eqn:map-diff}
&f\big(\Pi_{0} + \Delta {\Pi}_{0}\big) - f\big(\Pi_{0}\big) = O\big(\|\Delta {\Pi}_{0}\|^{2}\big) \nonumber \\* 
&\hspace{2mm} - \Phi_{\Pi_{10}} \bigg[ T_{10} \Delta {\Pi}_{0} \Sigma_{0} + \Sigma_{0} \Delta {\Pi}_{0} T_{10} \nonumber \\* 
&\hspace{2mm} + \int_{0}^{1} \! (T_{10} - T_{s0}) \Delta {\Pi}_{0} P_{s} + P_{s} \Delta {\Pi}_{0} (T_{10} - T_{s0}) \, \dd s \bigg] \Phi_{\Pi_{10}}\t.
\end{align*}
Vectorizing both sides of the previous equation yields 
\begin{multline*}
\vect \Big( f\big(\Pi_{0} + \Delta {\Pi}_{0}\big) - f\big(\Pi_{0}\big) \Big) = \partial {\bar{f}}\big(\vect(\Pi_{0})\big) \vect\big(\Delta {\Pi}_{0}\big) \\* 
+ O\big(\|\Delta {\Pi}_{0}\|^{2}\big).
\end{multline*}
It follows that
\begin{align*} 
\partial {\bar{f}}\big(\vect(\Pi_{0})\big) &= - \Phi_{\Pi_{10}} \otimes 
\Phi_{\Pi_{10}} \bigg[ \Sigma_{0} \otimes T_{10} + T_{10} \otimes \Sigma_{0} \nonumber \\* 
&\hspace{-4mm} + \int_{0}^{1} P_{s} \otimes \big(T_{10} - T_{s0}\big) + \big(T_{10} - T_{s0}\big) \otimes P_{s} \, \dd s \bigg],
\end{align*}
thus completing the proof. \hfill 
\end{proof}


\begin{proof}[Proof of Lemma~\ref{lem:map-inv}]
Since $T_{s0}$ is continuous in $\Pi_{0}$ and $s \in [0, 1]$, $\partial {\bar{f}}\big(\vect(\Pi_{0})\big)$ is continuous in $\vect(\Pi_{0})$. 
First, we show that $\partial {\bar{f}}\big(\vect(\Pi_{0})\big)$ is nonsingular at each $\Pi_{0} \prec N_{10}^{-1}$. 
Since $\Phi_{\Pi_{10}}$ is nonsingular, $\Phi_{\Pi_{10}} \otimes \Phi_{\Pi_{10}}$ is nonsingular as well. 
It suffices to show that the term in the square brackets of \eqref{eqn:map-jacob}, that is,
\begin{multline*}
S \triangleq \Sigma_{0} \otimes T_{10} + T_{10} \otimes \Sigma_{0} \\* 
+ \int_{0}^{1} P_{s} \otimes \big(T_{10} - T_{s0}\big) + \big(T_{10} - T_{s0}\big) \otimes P_{s} \, \dd s,
\end{multline*}
is nonsingular. 
Notice that $S$ is symmetric, since $\Sigma_{0}, T_{10} \succ 0$ and $P_{s}, T_{10} - T_{s0} \succeq 0$ are all symmetric. 
Let $X \neq 0$ be an $n \times n$ matrix, not necessarily symmetric. 
Then,
\begin{align*}
&\vect(X)\t S \vect(X) = \trace \bigg( X\t T_{10} X \Sigma_{0} + X\t \Sigma_{0} X T_{10} \\* 
&\hspace{9mm} + \int_{0}^{1} X\t \big(T_{10} - T_{s0}\big) X P_{s} + X\t P_{s} X \big(T_{10} - T_{s0}\big) \, \dd s \bigg) \\
&\geq \trace \Big( T_{10}^{\frac{1}{2}} X \Sigma_{0} X\t T_{10}^{\frac{1}{2}} + T_{10}^{\frac{1}{2}} X\t \Sigma_{0} X T_{10}^{\frac{1}{2}} \Big) > 0.
\end{align*}
Thus, $S \succ 0$, which implies that $\partial {\bar{f}}\big(\vect(\Pi_{0})\big)$ is nonsingular at each $\Pi_{0} \prec N_{10}^{-1}$.

Next, we show that the map $f$ is proper, that is, for any compact subset $\mathcal{K} \subset \{\Sigma_{1} \in \R^{n \times n} \,|\, \Sigma_{1} = \Sigma_{1}\t \succ 0\}$, the inverse image $f^{-1}(\mathcal{K}) \subset \{\Pi_{0} \in \R^{n \times n} \,|\, \Pi_{0} = \Pi_{0}\t \prec N_{10}^{-1}\}$ is compact. 
Since $f$ is continuous, the inverse image of a closed set is closed. 
Since $\mathcal{K}$ is bounded, in view of \eqref{map:pi0-sigma1}, the set
\begin{equation*}
\Big\{ \Phi_{A_{10}} (I - N_{10}\Pi_{0}) \Sigma_{0} 
(I - \Pi_{0}N_{10}) \Phi_{A_{10}}\t \, \Big| \, \Pi_{0} \in f^{-1}(\mathcal{K}) \Big\}
\end{equation*}
is bounded. 
It follows that $f^{-1}(\mathcal{K})$ is bounded. 
Hence, $f^{-1}(\mathcal{K})$ is compact, and $f$ is proper.

Since the set of positive definite matrices is convex, it is simply connected~\cite{krantz2002implicit}. 
By Hadamard's global inverse function theorem~\cite{krantz2002implicit}, $f$ is a homeomorphism, thus a bijection. \hfill 
\end{proof}


\begin{proof}[Proof of Theorem~\ref{thm:exist-unique-1d}]
By the continuity and monotonicity of $\pi(t; \pi_{0})$ in $\pi_{0}$ for all $t \in [0, 1]$, it follows that, as $\pi_{0} \to -\infty$, $\pi(t; \pi_{0}) \to N(0, t)^{-1}$ monotonically. 
By \eqref{sol:sigma} and the monotone convergence theorem, it follows that
\begin{equation*}
\lim_{\pi_{0} \to -\infty} \sigma(1; \pi_{0}) = \eta^{-}, \quad \text{for}~\sigma_{0} = 0.
\end{equation*}

Let now $c(0)d(0) \neq 0$. 
By continuity, $d(t) c^{2}(t)$ is bounded below from zero on some interval $[0, \varepsilon]$ for $\varepsilon \in (0, 1]$. 
Let $Z > 0$ satisfy
\begin{equation*}
e^{- \int_{t}^{1} 2a(\tau) \, \dd\tau} d(t) c^{2}(t) \geq Z > 0, \quad \forall t \in [0, \varepsilon],
\end{equation*}
and let
\begin{equation*}
\rho \triangleq \int_{0}^{1} e^{\int_{\tau}^{1} 2a(s) \, \dd s} r^{-1}(\tau) b^{2}(\tau) \, \dd\tau > 0.
\end{equation*}
Let $G \in \R$ and $\Omega > 0$ satisfy
\begin{equation*}
2a(t) \leq G, \quad r^{-1}(t) b^{2}(t) \leq \Omega, \quad \forall t \in [0, \varepsilon].
\end{equation*}
Without loss of generality, assume $\varepsilon$ is sufficiently small, such that, for all $t \in [0, \varepsilon]$, $e^{G t} - 1 = G t + O(t^{2})$. 
Then, for all $t \in [0, \varepsilon]$, 
\begin{align*}
\int_{0}^{t} e^{\int_{\tau}^{t} 2a(s) \, \dd s} r^{-1}(\tau) b^{2}(\tau) \, \dd\tau 
&\leq \int_{0}^{t} e^{G(t - \tau)} \Omega \, \dd\tau \\
&= \frac{\Omega\left(e^{G t} - 1\right)}{G} 
\approx \Omega t.
\end{align*}
Therefore, 
\begin{equation*}
\eta \geq \int_{0}^{\varepsilon} \left(\frac{\rho}{\Omega t}\right)^{2} Z \, \dd t = +\infty.
\end{equation*}
This completes the proof. \hfill 
\end{proof}


\end{document}